\newtheorem{thm}{Theorem}[section]
\newtheorem{prop}[thm]{Proposition}
\newtheorem{lem}[thm]{Lemma}
\newtheorem{cor}[thm]{Corollary}
\theoremstyle{definition}
\newtheorem{defn}[thm]{Definition}
\newtheorem{rem}[thm]{Remark}
\newtheorem{eg}[thm]{Example}
\newcommand{\RP}{\mathcal{R}\negthinspace\mathcal{P}}
\newcommand{\F}{\mathcal F}
\newcommand{\B}{\mathfrak B}
\newcommand{\Br}{\mathfrak B}
\newcommand{\sS}{\mathfrak S}
\newcommand{\dD}{\mathfrak D}
\newcommand{\LL}{\mathscr L}
\newcommand{\RR}{\mathscr R}
\newcommand{\DD}{\mathscr D}
\newcommand{\HH}{\mathscr H}
\newcommand{\coker}{\mathrm{coker}}
\newcommand{\codom}{\mathrm{codom}}
\newcommand{\dom}{\mathrm{dom}}
\newcommand{\sL}{\LL}
\newcommand{\sH}{\HH}
\newcommand{\sR}{\RR}
\newcommand{\sD}{\DD}
\newcommand{\N}{\mathbb N}
\newcommand{\oT}{\overline{T}}
\newcommand{\ot}{\overline{T}}
\newcommand{\dx}{\alpha}
\title[Brauer diagram models for phylogenetics]{Brauer and partition diagram models for phylogenetic trees and forests}
\author{Andrew Francis}
\address{Centre for Research in Mathematics and Data Science, Western Sydney University}
\email{a.francis@westernsydney.edu.au}
\author{Peter D. Jarvis}
\address{School of Mathematics and Physics, University of Tasmania}
\email{peter.jarvis@utas.edu.au}
\date{\today}
\begin{document}

\begin{abstract}
We introduce a correspondence between phylogenetic trees and Brauer diagrams, inspired by links between binary trees and matchings described by Diaconis and Holmes (1998).  This correspondence gives rise to a range of semigroup structures on the set of phylogenetic trees, and opens the prospect of many applications.  We furthermore extend the Diaconis-Holmes correspondence from binary trees to non-binary trees and to forests, showing for instance that the set of all forests is in bijection with the set of partitions of finite sets. 
\end{abstract}
\maketitle

\section{Introduction} 

Phylogenetic trees are a fundamental and persistent idea used to represent evolutionary relationships between species for nearly two centuries.  Their use extends beyond biology to the representation of language evolution, and to decision processes in algorithms.  Their appeal is that they provide an extra dimension to the ways to relate the elements of a set beyond a linear order.  They have been studied directly in numerous ways, through stochastic processes, combinatorics, and geometry (see for example~\cite{semple2003phylogenetics,steel2016phylogeny}).

An indirect but powerful way to study tree structures is to consider correspondences, or ways to represent trees, within other mathematical objects.  This is a standard approach in much of mathematics of course (representation theory is defined by this idea).  And there are several known bijections between rooted binary phylogenetic trees and other structures.

For instance, trees correspond to certain polynomials~\cite{liu2021tree,liu2020polynomial}, to perfect matchings~\cite{diaconis1998matchings}, and to more general partitions of finite sets~\cite{erdos1989applications}.  The latter has been extended to classes of forests relevant to phylogenetics via a correspondence between forests and trees, providing a way to enumerate such forests using Stirling numbers~\cite{erdos1993new}.  Other frameworks for capturing the combinatorics of tree counting and tree shapes have been through the analysis of binary sequences \cite{MR292690}, via symmetric function theory \cite{read_1968}, and numerous others (see for example the OEIS listing A001147~\cite{oeis}). 
The particular connection with perfect matchings has prompted the suggestion by Diaconis and Holmes~\cite{diaconis1998matchings} that there may be a relation to certain diagrams that have independently arisen within several branches of algebra, notably those given by the Brauer algebra~\cite{Brauer1937algebras}.

In this paper we take up and develop the link to Brauer algebras, and develop correspondences between the related diagram structures and phylogenetic trees, that provide an algebraic framework for their study.  
Each labelled, rooted binary tree will correspond to a unique element in the Brauer category, whose elements can be represented by asymmetrical Brauer diagrams.  More generally, a corresponding structure for non-binary trees can be defined using the associated partition category (for partition categories see for instance~\cite{martin1994temperley,jones1994potts}), and we are able to extend this correspondence to the set of all forests.  The extension to forests is associated with a new bijective correspondence between forests and partitions of finite sets that directly extends the ideas of Erd\H{o}s and Sz\'ekely~\cite{erdos1989applications} in a direction distinct from that in Erd\H{o}s~\cite{erdos1993new}.

The Brauer diagram framework has the potential to reveal more structure within the space of tree shapes (sometimes called topologies), and to provide new methods to randomly move about that space (something important for many tree reconstruction algorithms).  In particular, the inherited structure provides numerous ways to break down tree space: for instance, through the use of Green's relations from semigroup theory, we can break the set of phylogenetic trees into $\LL$-classes, $\RR$-classes, $\HH$-classes, and $\DD$-classes, each of which can be interpreted in the light of standard properties of trees (see Section~\ref{s:structure} for definitions of these semigroup classes). A concomitant of this is the possibility to craft new metrics on tree space.  Finally, we are able to define an operation on tree-space, effectively a multiplication of trees, \emph{relative to a fixed tree}, that turns the space into a semigroup, using the ``sandwich semigroup'' product. It is then possible to consider the ``regular'' elements of this semigroup, that provide a subsemigroup of the set of trees with respect to the chosen fixed tree. 

The paper begins in Section~\ref{s:background} with some background to the algebraic structures we will be using to describe tree space, namely the Brauer algebra and related monoids.  We also recall key concepts from semigroup theory, especially Green's relations and the corresponding equivalence classes into which a semigroup decomposes.  We then describe (Section ~\ref{ss:brauer}) the matching result of Diaconis and Holmes~\cite{diaconis1998matchings}, and show how it extends to a correspondence with Brauer diagrams.  Section~\ref{s:structure} then explores the result of transferring the results on the analysis and properties known in the Brauer category, across to phylogenetic trees, and discusses the rich new structure for tree space that becomes available.  Section~\ref{s:sandwich} introduces a product on trees, relative to a given tree, derived from the definition of the sandwich semigroup on Brauer diagrams, and explores various algebraic consequences of this powerful concept, including for example the characterization of related entities such as the associated regular subsemigroup of trees.  Finally, Section~\ref{s:non-binary.partitions} sketches a further broad generalisation of the correspondence, between non-binary trees, and associated partition diagrams and the partition category.  A key result is a correspondence results for trees and forests with sets of partitions of finite sets (Theorem~\ref{t:forests.partitions}).
We end with a discussion of directions that this algebraic landscape for phylogenetic trees might open up for further research (Section~\ref{s:discussion}).


\section{Background on phylogenetic trees, matchings, and semigroups}\label{s:background}

For a set $X$ of cardinality $n\ge 2$, a rooted (binary) phylogenetic $X$-tree $T$ is a graph with $n$ labelled, valence 1 leaf vertices labelled by $X$, together with $n\!-\!2$ unlabelled, valence three, internal vertices, and an additional unlabelled valence 2 root vertex.  Note that we do not consider the root to be an internal vertex.  
We will usually assume without loss of generality that $X={[}n{]} = \{1,2,\cdots, n\}$.  If $v$ is a vertex of $T$ we write $c(v)$ for the set of children of $v$: vertices that are the targets of edges whose source is $v$, and we say $v$ is the \emph{parent} of the vertices in $c(v)$. 
The cardinality of the set $\RP_n^{bin}$ of such leaf-labelled binary trees is $|\RP_n^{bin}|=(2n\!-\!3)!!$. This and other tree-related counting problems are well studied~\cite{MR292690,10.2307/2412810,erdos1989applications}, (see also \cite{read_1968,stanley1986enumerative,felsenstein2004inferring}).

In the latter part of this paper we will be working with trees that are not necessarily binary.  These satisfy the same properties as the binary $X$-trees, except that the internal vertices are not necessarily of valence 3 (for which we use the term non-binary trees\footnote{Note that the standard term ``non-binary trees'' includes the set of binary trees.}): instead they have in-degree 1, and out-degree \emph{at least} 2.  The set of all rooted phylogenetic $X$-trees is denoted $\RP_n$.  

In some contexts we also permit $|X|=1$, in which case the `trivial' tree is an isolated leaf labelled by the element of $X$, and which is also the root.  Note, the trivial tree is not an element of $\RP_n$ or $\RP_n^{bin}$, which are restricted to trees on $n\ge 2$ leaves.

An $X$-\emph{forest} is a set of trees whose leaf sets partition $X$.  The \emph{components} of a forest are the individual trees that make it up.  The \emph{trivial} forest on $X$ is the one in which all component trees are trivial; that is, a set of isolated leaves labeled by the elements of $X$.

A perfect matching on a set is a partition of the set into pairs, and the number of perfect matchings on $n$ elements ($n$ even) is $(n\!-\!1)!!$\,. This gives rise to a natural correspondence between rooted phylogenetic trees, and perfect matchings, 
on $2(n\!-\!1)$ elements (in this case, the leaf set $[n]$\,, augmented by
an additional $(n\!-\!2)$ elements, $[n\!+\!1, 2n\!-\!2] =
\{n\!+\!1, n\!+\!2,\cdots, 2n\!-\!3,2n\!-\!2\}$\,).  
This correspondence was formalized as a bijective map by Diaconis and Holmes \cite{diaconis1998matchings}, crystallizing that in Erd\H{o}s and Sz\'ekely~\cite{erdos1989applications}, and is described as follows.

Associating the leaves of $T$ with the first $n$ nodes of the perfect matching,
the initial set of pairings amongst all elements $[2n-2]=\{1,2,\cdots,2n-2\}$\, is simply generated from the cherries of the tree (leaves which share an immediate common ancestor), which are at most \smash{$\lfloor \textstyle{\frac 12}n\rfloor$} in number.
The next unmatched node, $n+1$\,, is assigned
to the internal tree vertex, both of whose child vertices are already labelled, one of which has the numerically lowest label. The next lowest available node label for any unmatched internal tree vertex or vertices is in turn assigned, amongst those with the already-labelled child vertices, to the one containing the numerically lowest child.
The process repeats until the node $2n\!-\!2$\,, corresponding to the last unlabelled internal tree vertex\,, is finally identified with its partner. An example, with a chord diagram representation (see below) of the node matchings, used in~\cite{diaconis1998matchings}, is given in Figure~\ref{fig:Circle6leafEx}.
This algorithm is formalised in pseudocode in Algorithm~\ref{alg:DH}.

\begin{algorithm}[ht]
\caption{Internal vertex labelling algorithm (Erd\H{o}s-Sz\'ekely, Diaconis-Holmes)}\label{alg:DH}
\begin{algorithmic}
\Require $T\in\RP_n^{bin}$ 
\Require $n \geq 1$
\State $m\gets 0$ \Comment{The number of labelled non-leaf vertices}
\State $I\gets$ set of unlabelled vertices in $T$ \Comment{Always non-empty (includes the root)}
\While{$|I| >1$}
	\State $C\gets$ $\{v\in I\mid c(v)\text{ are all labelled}\}$ \Comment{Always non-empty}
	\For{$v\in C$}
		\If{$c(v)$ have the lowest label among elements of $C$}
		    \State $v \gets$ labelled $n+m+1$
		    \State $m\gets m+1$
		    \State $I\gets I\setminus\{v\}$
		\EndIf 
	\EndFor
\EndWhile
\end{algorithmic}
\end{algorithm}

The correspondence with perfect matchings gives rise naturally to a different family of diagrams, namely those in the Brauer algebra ${\mathfrak B}_{n\!-\!1}$, which (unsurprisingly) has the same dimension, and the partial Brauer monoid $\mathfrak B_{n,n\!-\!2}$. In the next section we introduce these algebraic structures, and describe the correspondence.

A \emph{(set) partition} of a finite non-empty set $X$ is a set of pairwise disjoint subsets of $X$, whose union is $X$. 
If $\pi$ is a set partition of $X$, write $\ell(\pi)$ for the number of subsets in $\pi$, and write $|\pi|:=|X|$.
An \emph{(integer) partition} of a positive integer $n$ is a multiset of positive integers whose sum is $n$.  In this paper we will use ``partition'' to refer to \emph{set} partition unless otherwise noted.

A \emph{semigroup} is a set with an associative operation.  A simple example is the set of positive integers $\N^{>0}=\{1,2,3,\dots\}$ with the operation of addition.  If the semigroup has an identity element with respect to its operation, it is called a \emph{monoid}.  An example is the non-negative integers $\N:=\N^{\ge 0}$, in which the identity with respect to addition is of course 0.
Later in the paper we will introduce the semigroup notions of Green's relations and ``eggbox diagrams'' that represent these relations. For an introduction to some of these notions from semigroup theory, we recommend~\cite{howie1995fundamentals,everitt2021sympathetic}.

\section{Connections between trees, the Brauer algebra $\Br_{n-1}$, and Brauer diagrams $\Br_{n,n-2}$}
\label{ss:brauer}

In the first part of this paper we develop the correspondence between binary phylogenetic trees and perfect matchings via a further diagrammatic setting, that was already referred to in~\cite{diaconis1998matchings}: namely, by exploiting diagrams linked to the Brauer algebra.
These will share a basic structure and fundamental elements (the generators of the algebra), and so we briefly introduce the Brauer algebra ${\mathfrak B}_n(\lambda)$\, itself, before moving to the specific set of diagrams ${\mathfrak B}_{n,n-2}$ that will be our focus.

\subsection{Introduction to Brauer diagrams}\label{ss:brauer.intro}

For each $n=1,2,\cdots$ the Brauer algebra ${\mathfrak B}_n(\lambda)$ is the complex algebra generated by the elements~\cite{Brauer1937algebras} 
$$\{s_1,s_2,\cdots, s_{n\!-\!1}, e_1,e_2,\cdots, e_{n\!-\!1}\}$$
subject to the defining relations given in Appendix~\ref{s:brauer.relations}.  For the purposes of links to phylogenetics, 
we will make two significant restrictions to this generality.  

First, we will restrict to $\lambda=1$, and focus on $\Br_n=\Br_n(1)$ (although the potential to extend the ideas we discuss here to the full Brauer algebra makes an interesting question that we leave to future work).  
Second, we will treat ${\mathfrak B}_n$ and related objects ${\mathfrak B}_{m,n}$ (see below) as monoids (or partial monoids), working with just the basis elements of the algebra. These have a standard diagram transcription, in which elements are associated with 
graphs consisting of $2n$ nodes: `upper' nodes $[n]:=\{1,2,\cdots,n\}$ and `lower' nodes $[n']=\{1',2',\cdots,n'\}$ with edges specified as 
set partitions of $[n]\cup [n']$ consisting of pairs of nodes. In particular, $s_i$ contains the pairs $\{i,(i\!+\!1)'\}, \{(i\!+\!1), i'\}$ with all remaining nodes
$\{1, 1'\}, \{2,2'\}, \cdots $,  sequentially paired, while
$e_i$ contains $\{i,(i\!+\!1)\}, \{i',(i\!+\!1)'\}$\,, with all remaining nodes sequentially paired (see Figure \ref{fig:GeneratorsSandE}).

\begin{figure}[ht]
\centering
\includegraphics{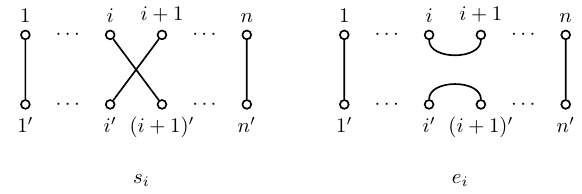}
\caption{Diagrammatic representation of the generators $s_i$ and $e_i \in \B_n$\,, for $i =1,2,\cdots,n\!-\!1$\,.  }
\label{fig:GeneratorsSandE}
\end{figure}

The (associative) product is formed by diagram concatenation, with edges joined via identification of the lower $[n']$ nodes of the first (upper) multiplicand, and the upper $[n]$ nodes of the second (lower) multiplicand. An example is shown in Figure~\ref{f:brauer.product.eg}.

\begin{figure}
    \centering
\includegraphics[width=.85\textwidth]{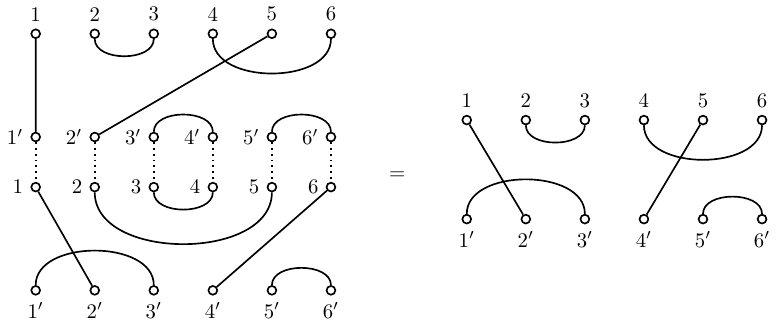}
    \caption{Multiplication of Brauer diagrams in $\Br_6$.}
    \label{f:brauer.product.eg}
\end{figure}

\subsection{Binary phylogenetic $X$-trees and their correspondence with Brauer diagrams}\label{ss:trees}

Given a binary tree on $n\ge 2$ leaves, once its internal vertices are labelled according to Algorithm~\ref{alg:DH}, we immediately have a matching on the set $\{1,\dots,2n-2\}$, obtained by pairing labels on sibling vertices~\cite{erdos1989applications,diaconis1998matchings}.  The reverse direction --- constructing a tree from a matching --- begins with laying out the leaf vertices and then pairing matched vertices that are already there, in increasing order.

Informally, the relation to matchings can be seen as the result of the growth of a binary tree via successive bifurcations: starting from a simple tree with two leaves (a single `cherry'), with matching $\{1,2\} =[2]$\,, one or the other edge suffers a bifurcation, giving for example $\{ \{1,2\}, \{21,22\}\}$ (a matching of $[4]$), whereupon
$2$, the parent (still paired to its original partner) becomes an internal node,
with the children $\{21,22\}$ a new pair of nodes. In this way, by iteration, a correspondence is built with matchings, or indeed Brauer diagrams, 
provided a unique convention for internal node enumeration is given
(tantamount simply to having `unlabelled' internal nodes), as has been described above following~\cite{diaconis1998matchings}. The root location is inferred from the last matched pairing.

Through the bijection between binary trees and matchings described in Section~\ref{s:background}, structural features of the Brauer
monoid are induced on trees.  Each binary tree on $n$ leaves uniquely defines a matching on $[2n-2]$, and each such matching uniquely defines a Brauer diagram in $\Br_{n-1}$ by connecting nodes labelled by integers paired in the matching (see example in Figure~\ref{fig:6leafEx}). Write $\delta(T)\in {\mathfrak B}_{n-1}$ for the Brauer diagram corresponding to $T\in\RP_n^{bin}$.

\begin{figure}[ht]
\includegraphics{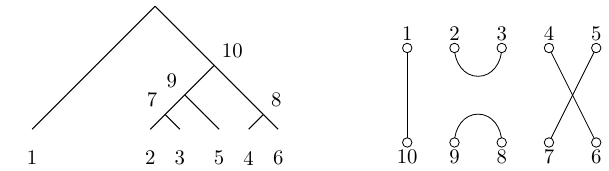}
\caption{The six-leaf tree $T\in \mathcal{RP}_6$ on the left corresponds to the matching $\{\{1,10\},\{2,3\},\{4,6\},\{5,7\},\{8,9\}\}$, giving the Brauer diagram $\delta(T)\in {\mathfrak B}_5$ shown on the right.}
\label{fig:6leafEx}
\end{figure}

If the product of two elements $\alpha, \alpha' \in {\mathfrak B}_{n-1}$ is written $\alpha \alpha'$ using juxtaposition, then there is an induced product on trees, defined by
\[
T\cdot T' = \delta^{-1}\big(\delta(T)\delta(T')\big)
\]
Similarly, the $*$-involution in ${\mathfrak B}_{n-1}$\,, defined on pairings by 
interchanging $i \rightarrow i' $\,, $i'\rightarrow i$\, for all $i$, or on diagrams by reflection about the horizontal axis, induces an involution on trees,
\[
T^* = \delta^{-1}\big(\delta(T)^*\big)\,.
\]
For example, the tree in Figure~\ref{fig:6leafEx} satisfies $T^*=T$, because its diagram is symmetric about the horizontal axis.

The bijection from trees of course extends to a variety of 
representations of the Brauer monoid or combinatorial objects tied thereto.
From the perspective of matchings of an even set, for example, it is more natural to label elements as $[1,2(n\!-\!1)]$ rather than marking them as $[n\!-\!1]\cup[(n\!-\!1)']$\, (as already done in establishing bijection $\delta$)\,, and a diagram correspondence could simply be established via arcs between
a linear arrangement of nodes $\{ 1,2,\cdots, 2(n\!-\!1) \}$\,. A less biased arrangement is a chord diagram, with arcs linking an even number
$2(n\!-\!1)$ of nodes arranged in a circle, as in~\cite{diaconis1998matchings}. Relative to a fixed ordering,
such a diagram represents a set of transpositions, an element of the symmetric group ${\mathfrak S}_{2(n-1)}$\,. Thus, the chord diagram in Figure \ref{fig:Circle6leafEx} corresponding to the six leaf tree of Figure~\ref{fig:6leafEx}, is obtained by bending up the ends of the lower rail of nodes 
$\{n,n\!+\!1, \cdots, 2(n\!-\!1)\}$ of the corresponding Brauer diagram, so that they join on to the upper rail of nodes 
$\{1,2,\cdots, (n\!-\!1)\}$\,. Associating a labelled tree with an element of the symmetric group in this way, as a product of transpositions coming from the matching, confers yet another possible multiplicative structure (exploited in the transposition distance for trees~\cite{valiente2005fast}). 

\begin{figure}[ht]
\includegraphics{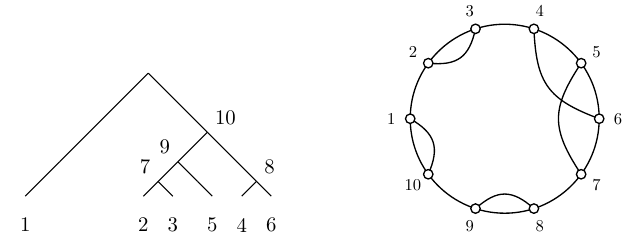}
\caption{The six-leaf tree $T\in \mathcal{RP}_6^{bin}$ from Figure~\ref{fig:6leafEx}, and its corresponding chord diagram representation.}
\label{fig:Circle6leafEx}
\end{figure}

For manipulations on trees, in this paper we will use a modified bijection, not between $\RP_n^{bin}$ and ${\mathfrak B}_{n-1}$\,, but between $\RP_n^{bin}$ and ${\mathfrak B}_{n,n-2}$\,, one of the equivalent diagrammatic presentations of matchings.  
In practice, the original $\delta$ and the modified bijection $\delta:\mathcal{RP}_n^{bin}\rightarrow{\mathfrak B}_{n,n-2}$ (for which we use the same symbol) are the same algorithmically, with the difference that the $n$ nodes on the upper edge correspond to the leaves of the $n$-leaf trees, and the $n-2$ nodes on the lower edge are those reserved for the corresponding internal tree nodes (as mentioned, with the root location being inferred from the last pairing). It is also convenient to continue the numbering from top to bottom in clockwise fashion. See Figure~\ref{fig:BrSixFourLfEx} for the Brauer transcription in ${\mathfrak B}_{6,4}$ of the six-leaf tree whose presentations in ${\mathfrak B}_{5}$\,, and as an element of ${\mathfrak S}_{10}$ (via a chord diagram) have been given in Figures~\ref{fig:6leafEx} and ~\ref{fig:Circle6leafEx}\,, respectively. 

\begin{figure}[ht]
\includegraphics{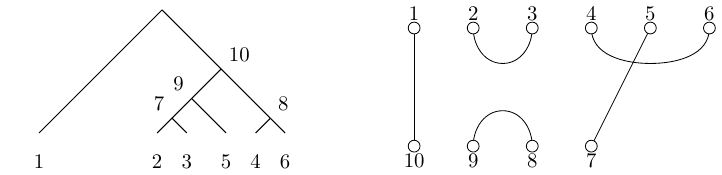}
\caption{The six-leaf tree $T\in \mathcal{RP}_6^{bin}$ from Figure~\ref{fig:6leafEx}, and its corresponding Brauer element $\alpha=\delta(T)\in {\mathfrak B}_{6,4}$\,. For this diagram we have dom$(\alpha)=\{1,5\}$, codom$(\alpha)=\{7,10\}$, $\ker(\alpha)=\{(2,3),(3,2),(4,6),(6,4)\}$, and $\coker(\alpha)=\{(8,9),(9,8)\}$, and rank$(\alpha)=2$.}
\label{fig:BrSixFourLfEx}
\label{f:kernel}
\end{figure}

With these conventions, the setting of trees in $\RP_n^{bin}$ as a semigroup, afforded by the transcription to ${\mathfrak B}_{n-1}$, is supplanted by 
the categorical setting of partial monoids \cite{MR3848011,MR3849129,dolinka2021sandwich}, where a multiplication on ${\mathfrak B}_{p,q} \times {\mathfrak B}_{r,s}$ exists only if $q=r$\, (compare \cite{morton1990knots}). In practice we 
analyze ${\mathfrak B}_{n,n-2}$\,, and hence $\mathcal{RP}_n^{bin}$\,, via  left- and right- actions by ${\mathfrak B}_{n}$, ${\mathfrak B}_{n-2}$ respectively, and the associated equivalence classes. 
Moreover, via multiplication with the help of intermediate `sandwich' elements belonging to ${\mathfrak B}_{n-2,n} \cong {\mathfrak B}^*_{n,n-2}$\,, a semigroup structure can indeed be re-imposed (see Section~\ref{s:sandwich}). This turns out to admit a rather universal description, allowing further structural aspects amongst the participating trees to be distinguished.

We now briefly introduce some language for describing features of a diagram $\dx\in\B_{n,n-2}$, that echoes that for functions, as follows.  
A \emph{block} of a diagram $\dx$ is a connected set of nodes in $\dx$.
A \emph{transversal} is an edge that passes between the top row and the bottom row.  
The \emph{domain} of a diagram $\dx$, $\dom(\dx)$, is the set of points along the top of the transversals, a subset of $[n]$. The \emph{rank} of $\dx$ is $|\dom(\dx)|$. The \emph{codomain}, $\codom(\dx)$, is the set of points along the bottom of the transversals.
The kernel $\ker(\dx)$ and cokernel $\coker(\dx)$ are defined slightly differently, for technical reasons which will become apparent later:
\begin{align*}
\ker(\dx)	&=\{(i,j)\in [n]\times[n]\,:\, \text{$i$ and $j$ belong to the same block}\}\\
\coker(\dx)	&=\{(i,j)\in [n+1,2n-2]\times[n+1,2n-2]\,:\, \text{$i$ and $j$ belong to the same block}\}.
\end{align*}
Note that if $\{i,j\}$ is a block in $\dx$ with $i,j\in [n]$, then both $(i,j)$ and $(j,i)$ are in $\ker(\dx)$.
An example is given in Figure~\ref{f:kernel}.  We will occasionally want to refer to the \emph{underlying set} of the kernel or cokernel.  By this we mean the set of all points in $[n]$ or $[n+1,2n-2]$ respectively that appear in relations in the sets.  That is, for a set of binary relations $S$, $U(S):=\{i\,:\, (i,j)\in S\text{ for some $j$}\}$.

With these definitions we note the following properties that hold for such a diagram $\dx\in\B_{n,n-2}$, that will be used in the sequel:
\begin{align*}
\dom(\dx)\cup U(\ker(\dx))=&\,[n] \qquad \text{and} \\
\codom(\dx)\cup U(\coker(\dx))=&\,[n+1,2n\!-\!2],
\end{align*}
noting that $\dom(\dx)\cap U(\ker(\dx))=\emptyset=\codom(\dx)\cap U(\coker(\dx))$,
and also the numerical relations
\begin{align*}
|\dom(\dx)|= &\,|\codom(\dx)|\quad\\
\text{and}\qquad |U(\ker(\dx))|=&\,|U(\coker(\dx))|+2\,.
\end{align*}

\section{Structuring tree space via Green's relations.}\label{s:structure}

With the identification with phylogenetic trees in $\mathcal{RP}_n^{bin}$\, as Brauer diagrams
of type ${\mathfrak B}_{n,n-2}$\, via the bijection $T \mapsto \delta(T)$\,, the correspondence with the monoidal structure in ${\mathfrak B}_{n-1} \equiv {\mathfrak B}_{n-1,n-1}$\,, afforded by the bijection $\delta$\,, is no longer direct. Rather, it is supplanted by 
the categorical setting of partial monoids \cite{MR3848011,MR3849129,dolinka2021sandwich}, where a multiplication on ${\mathfrak B}_{p,q} \times {\mathfrak B}_{r,s}$ exists only if $q=r$\,. 

In practice we analyze ${\mathfrak B}_{n,n-2}$\,, and hence $\mathcal{RP}_n^{bin}$\,, via  left- and right- multiplication by elements of ${\mathfrak B}_{n}$ and ${\mathfrak B}_{n-2}$ respectively, and the associated equivalence classes. 
{It is important to note here that the labels on the nodes that are matched in the product are implicitly re-assigned, along the lines of the description in Section~\ref{ss:brauer}\ref{ss:brauer.intro}, so that instead of bottom nodes reading $8,7,6$ from left to right in an element of $\Br_{5,3}$, we treat them as though labelled $1',2',3'$.  The multiplication on the bottom by an element of $\Br_3$ then matches $1',2',3'$ along the bottom of one diagram with $1,2,3$ along the top of the other.}

As in the previous usage, we pull the appropriate multiplications back to trees via the bijection:
\begin{align*}
\sigma \cdot T = &\, \delta^{-1}\big(\sigma \delta(T)\big)\,, 
\quad \mbox{for}\quad T \in \RP_n^{bin}\,, \sigma \in \B_{n}\,,\\
T\cdot \tau = &\, \delta^{-1}\big(\delta(T)\tau\big)\,,
\quad \mbox{for}\quad T \in \RP_n^{bin}\,, \tau \in \B_{n-2}\,.
\end{align*}

\begin{table}[tbp]
\includegraphics{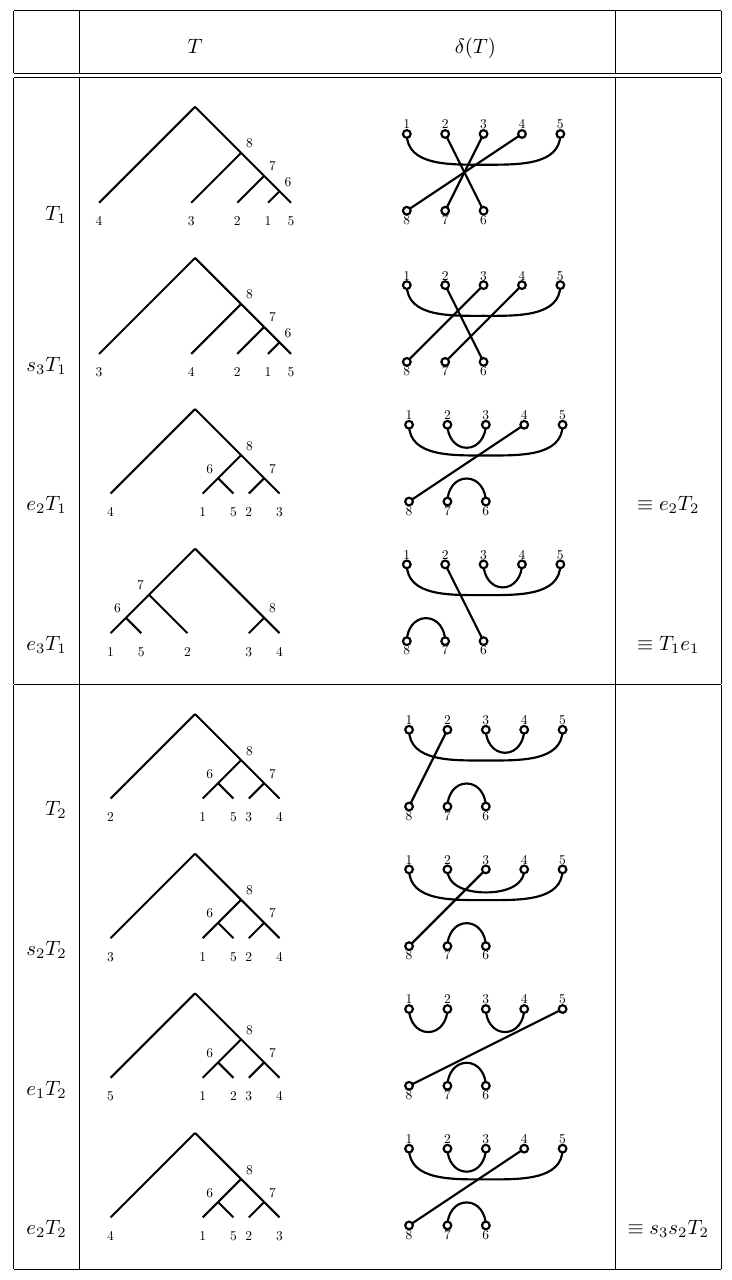}
\mbox{}\\
\caption{Examples of left- and right-actions of Brauer generators on 5-leaf trees in ${\mathfrak B}_{5,3}$\,.
Here $e_2$ has the same action on $T_2$ as the transposition $(24)=s_2s_3s_2$ (as well as the action of $s_3s_2$),
while $e_1$ acts identically to $(25)= s_2s_3s_4s_3s_2$. In this case, 
$e_2T_1= e_2 T_2$, and also $e_3T_1=T_1e_1$\,. 
}
\label{tab:fivethreeexx}
\end{table}
Examples of the results of multiplication of 5 leaf trees by Brauer generators are shown in Table~\ref{tab:fivethreeexx}. 

In semigroup theory, elements that can be obtained from each other by left- and/or right-multiplication are classified according to Green's relations (see~\cite{green1951structure,howie1995fundamentals}).   In the context of phylogenetic trees we find a \emph{restricted} version of these relations to be most useful, in which the actions are by elements of the symmetric group $\sS_n$ on the left (top of the diagram) and $\sS_{n-2}$ on the right (bottom), as opposed to the full Brauer monoid $\Br_{n}$ or $\Br_{n-2}$.  We define these restricted adaptions of Green's relations as follows:
\begin{defn}
Green's $\sL$, $\sR$, $\sD$, and $\sH$ relations are defined as follows, for $\B_{n,n-2}$:

\begin{tabular}{r@{:\quad}r@{\ $\Leftrightarrow$\ }ll}
$\sL$ classes & $T'\sL T$ & $\exists\,\alpha \in \sS_n : T' = \alpha\cdot T$, 
	& $[T]_{\sL} = \{ T': \exists\, \alpha \in \sS_n :T' = \alpha\cdot T \}$\\
$\sR$ classes & $T'\sR T$ & $\exists\, \alpha \in \sS_{n-2} : T' =  T\cdot\alpha$, 
	& $[T]_{\sR} = \{ T': \exists\, \alpha \in \sS_{n-2} :T' =  T \cdot\alpha \}$\\
$\sD$ classes & $T'\sD T$ & $T'\sL T \, \mbox{ or } \, T'\sR T$,
	& $[T]_{\sD} = [T]_{\sL} \cup [T]_{\sR}$\\
$\sH$ classes & $T'\sH T$ & $T'\sL T \,\mbox{ and } \,T'\sR T$,
	& $[T]_{\sH} = [T]_{\sL} \cap [T]_{\sR}$.
\end{tabular}
\end{defn}

In other words, the $\sL$ class containing $T$ in $\Br_{n,n-2}$ is the set of diagrams that can be reached from $T$ by a left multiplication (by an element of $\sS_n$).  Likewise, the $\sR$ classes arise from \emph{right} multiplication by elements of $\sS_{n-2}$.  The $\sH$ classes are sets of elements that are both $\sL$ and $\sR$ related, and these form the smallest of this family of equivalence classes.   In contrast, the $\sD$ classes are unions of intersecting $\sL$ and $\sR$ classes.  

These classes are displayed using arrays called \emph{eggbox diagrams} (see Chapter 2 of~\cite{howie1995fundamentals}).  Each $\sD$ class can be represented as a rectangular array whose entries are $\sH$-classes.  The $\sL$-classes are then given by the rows of the $\sD$-class, and the $\sR$-classes by the columns. 

Note that the actions by $\sS_n$ and $\sS_{n-2}$ are not faithful, because for instance the action of a transposition $s_i$ on a cup between $i$ and $i+1$ has the effect of the identity action (for example, in Table~\ref{tab:fivethreeexx}, $s_3T_2=T_2$).

Because top actions cannot affect the bottom of a diagram, all diagrams in an $\sL$ class have the same bottom half.  And since the action of $\sS_n$ on the top connects all arrangements of the top half of the diagram that have the same rank ($\sS_n$ is the full symmetric group), each $\sL$ class is the full set of diagrams with a particular bottom half.  

Likewise, each $\sR$ equivalence class is indexed by the common \emph{top} half of the diagrams within it.  The $\sH$ classes are then those diagrams that have the same top and bottom halves (the intersections of $\sL$ and $\sR$ classes), and there will be $k!$ of these, where $k$ is the rank of the diagrams in the class (this counts the number of ways to join the set of half-strings that come down from the top-half and up from the bottom half).  The $\sD$ classes are unions of $\sL$ and $\sR$ classes, namely all diagrams of the same rank.

Figure \ref{fig:EggboxEx} shows a Green eggbox scheme displaying Brauer diagrams (and corresponding trees) for the $\sD$ classes of the Brauer monoid $\Br_{6,4}\cong \RP_6^{bin} $ corresponding to 6 leaf trees. The class $\sD_2$ with two cherries (rank 2) is shown in detail, to illustrate the coordinatization of trees provided by the  Brauer structure.
This rank two $\sD_2$ class \cite{MR3848011,MR3849129,dolinka2021sandwich} and eggbox represents a total of 540 trees, with 6 rows ($\sL$ classes), and 45 columns ($\sR$ classes), with 270 intersections 
($\sH$ classes, of cardinality $2!$). The corresponding eggbox for rank 0
(three cherries), this eggbox (one cherry, rank 2), and the eggbox for rank 4 (caterpillar trees, with one cherry), together enumerate
the totality of $|\RP_6^{bin}|=9!! = 945 \equiv 45 +540  + 360$ labelled 6-leaf phylogenetic trees on 6 leaves, respectively.

\begin{figure}[ht]
\includegraphics[scale=1]{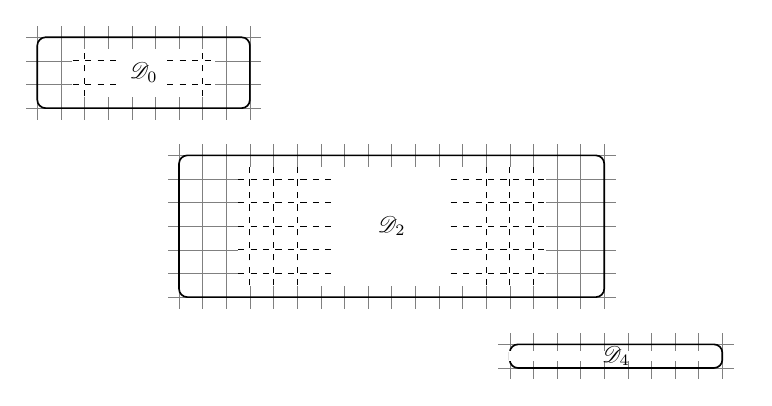}
\\
\includegraphics[width=\textwidth]{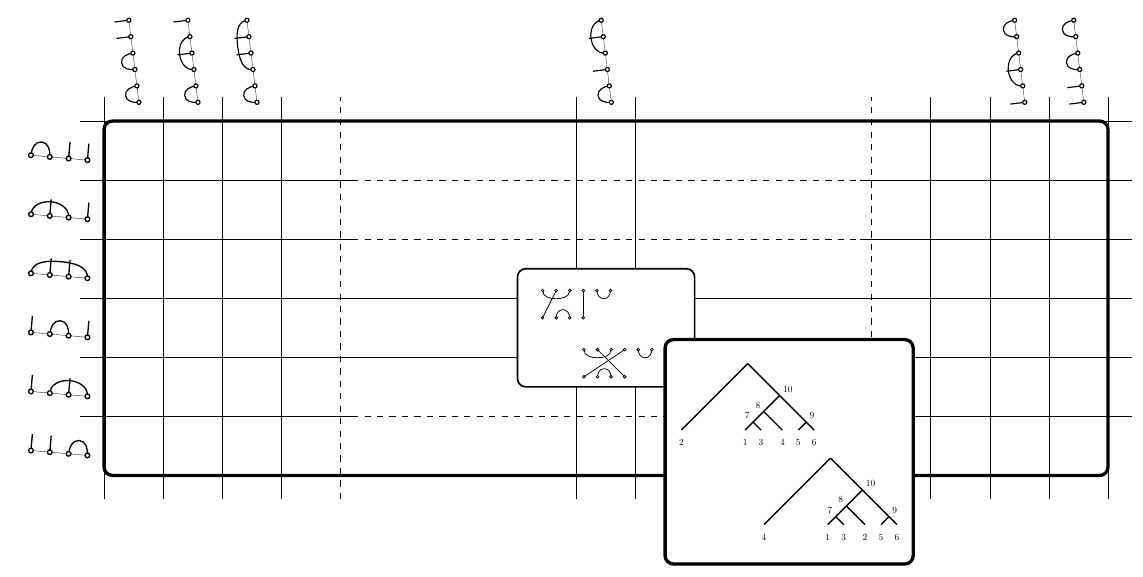}
\caption{An eggbox diagram illustrating the enumeration of elements of the ${\sD}$ classes of the Brauer monoid $\Br_{6,4}\cong \RP_6$, representing 6 leaf trees.
Upper diagram: schematic illustration of the separate eggboxes for the classes
${\sD_0}$, ${\sD_2}$  and ${\sD_4}$\,, of Brauer ranks $k=0,2,4$\,, corresponding to trees with $3, 2$ or $1$ cherry, and comprising $45, 540$ and $360$ trees, respectively (see text for discussion). Lower diagram: a ``close-up'' of the $\sD_2$ eggbox for trees with two cherries (and Brauer rank 2). 
Row labels show bottom-halves of Brauer diagrams, representing equivalence classes by left action (on the top of a diagram), the $\sL$-classes, whereas column labels show top-halves of diagrams, representing equivalence classes by right action, the $\sR$-classes. The inset shows two labelled trees, corresponding to the two diagrams belonging to the selected ${\sH}$ class (the intersection of the selected row and column). 
}
\label{fig:EggboxEx}
\end{figure}

\begin{rem}
These equivalence classes have natural interpretations in terms of trees. The $\sR$ classes are those trees that are the same up to permutations of leaf labels: their diagrams can be reached by multiplication by an element of $\sS_n$ along the top (which corresponds to the leaves).  The $\sL$ classes are those that have the same cherry structure at the leaves, but whose internal vertices have been permuted.  The $\sH$ classes represent trees with the same cherries but whose non-cherry leaves have been permuted.  Finally, the $\sD$ classes represent all trees with the same number of cherries (namely $\textstyle{\frac 12}(n-k)$, where $k$ is the rank).
\end{rem}


\section{Multiplicative structure: the set of phylogenetic trees as a semigroup}\label{s:sandwich}

In using a bijection between trees and an algebraic object like a Brauer monoid, the pay-off is the algebraic structure that comes to the set of trees.  In choosing to use the unbalanced diagrams of $\Br_{n,n-2}$, we preserve information about the tree structure (with the leaves all along the top axis of the diagram), but as noted above, we lose the capacity to multiply diagrams in the way that is possible if the top and bottom axes have the same number of nodes.   

\subsection{The sandwich product}

There is, nevertheless, still a multiplicative structure available to unbalanced diagrams such as $\Br_{n,n-2}$, namely the \emph{sandwich product}.  This requires a fixed diagram (tree) $T$, and then allows the product of two diagrams $T_1$ and $T_2$ to be composed by inverting $T$ (flipping it in the horizontal axis) to obtain a diagram $\oT\in \Br_{n-2,n}$, and sandwiching it between the two trees:
\[T_1\ast_T T_2:=T_1\cdot \oT\cdot T_2
\]
where $\cdot$ is composition of diagrams.  This product then allows us to define a semigroup relative to $T$, which we denote $\Br_{n,n-2}^T$.  
Figure \ref{fig:Sandwiching} illustrates such a sandwich product in terms of Brauer diagrams in $\Br_{7,5}$\, while the induced operation at the level of the corresponding trees is shown in Figure~\ref{fig:TREESandwiching}\,  (as can be seen, in this case the examples show a sandwich-square, of the form $T\ast_{T'} T := T \cdot \ot' \cdot T$\,)\,.

\begin{figure}[ht]

\begin{center}
\includegraphics{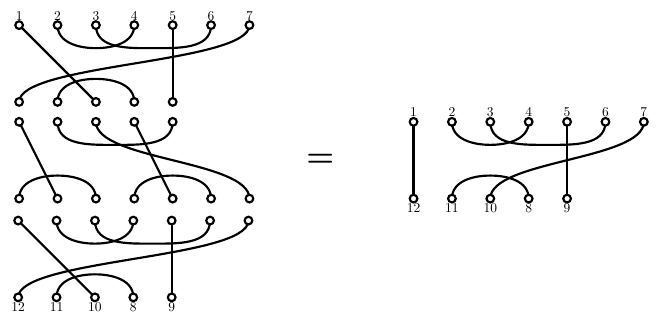}
\end{center}
\caption{The sandwich product of two Brauer diagrams in $\Br_{7,5}$, relative to a third (in the middle on the left). The corresponding trees are shown in Figure~\ref{fig:TREESandwiching}. }
\label{fig:Sandwiching}
\end{figure}

\begin{figure}
\begin{center}
\includegraphics[width=\textwidth]{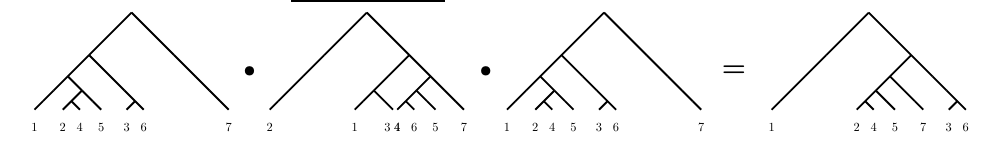}
\end{center}
\caption{The sandwich product of two trees relative to a third (with the bar over it). This product is computed using the diagram product of the corresponding Brauer diagrams, as shown in Fig~\ref{fig:Sandwiching}.}
\label{fig:TREESandwiching}
\end{figure}

Interestingly, the sandwich semigroups relative to different trees are isomorphic if the trees have the same \emph{rank} (the number of transversals in the diagram)~\cite[Theorem 6.4]{dolinka2021sandwich}.  This means that the choice of tree $T$ as sandwich is only important (in terms of semigroup structure) up to its rank.

Note that if a diagram has rank $k$, then its composition with any other diagram must have rank at most $k$, because composing diagrams cannot generate additional transversals.  In particular, the sandwich semigroup relative to tree $T$ does not contain an identity in general, because no sandwich product with a tree $T'$ of rank greater than rank$(T)$ can ever return a tree of the same rank as $T'$. 
As a consequence, the sandwich semigroup is not a monoid.

\subsection{The regular subsemigroup relative to a given tree}

Let $T\in\B_{n,n-2}$ be an $n$ leaf tree.  Using the sandwich product defined above, the set of trees with operation relative to $T$ is called the sandwich semigroup, denoted $\Br_{n,n-2}^{T}$.  The ``regular'' elements of this semigroup form a subsemigroup $Reg(\Br_{n,n-2}^{T})$.  A semigroup element $x$ is said to be \emph{regular} if there exists an $a$ such that $xax=x$. For the sandwich semigroup, this is equivalent to the property: $\alpha$ is regular if and only if $\oT\cdot\alpha\cdot\oT$ has the same rank as $\alpha$~\cite{dolinka2021sandwich}.  

The regular elements in this Brauer sandwich semigroup can be characterised as follows (we extract the case relevant to this context).

\begin{prop}
[\cite{dolinka2021sandwich} Prop 6.13]\label{p:regular.elts}
\begin{align*}\text{Reg}(\Br_{n,n-2}^{T})=\{\alpha\in\Br_{n,n-2}\ :\ &\coker(\alpha)\vee \ker(\oT) \text{ separates }\codom(\alpha)\\  &\text{ and }\ker(\alpha)\vee \coker(\oT) \text{ separates }\dom(\alpha) \}.
\end{align*}
\end{prop}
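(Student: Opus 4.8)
The plan is to treat Proposition~\ref{p:regular.elts} not as something to prove from scratch, but as a transcription of the general sandwich-semigroup result \cite[Prop.~6.13]{dolinka2021sandwich} into the specific setting of $\Br_{n,n-2}$ with a fixed tree-diagram $T$. First I would recall the general criterion for regularity in a sandwich semigroup: an element $\alpha\in\Br_{p,q}^{T}$ (with $T\in\Br_{q,p}$, here $\oT=\bar T\in\Br_{n-2,n}$) is regular precisely when $\oT\circ\alpha\circ\oT$ has the same rank as $\alpha$, as already stated in the excerpt. So the real content is the combinatorial reformulation: the rank-preservation condition on $\oT\alpha\oT$ is equivalent to the two "separation" conditions on joins of kernels and cokernels. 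I would state explicitly what it means for a set partition (or the join $\ker\vee\coker$, viewed as the partition generated by both equivalence relations) to \emph{separate} a subset $Y$ of nodes: namely that no two distinct elements of $Y$ lie in the same block of that join. This is the notion used in the Brauer/partition-monoid literature, and I would cite the relevant definition from \cite{dolinka2021sandwich} (or \cite{MR3848011,MR3849129}) rather than re-derive it.

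Next I would do the rank bookkeeping for the product $\oT\circ\alpha\circ\oT$. When one stacks $\oT$ on top of $\alpha$ on top of $\oT$, a transversal of the triple product survives only if one can trace a path from an upper node, through $\oT$, into a transversal of $\alpha$, and back out through the lower $\oT$; paths get killed exactly when they hit a "cap" or "cup" created by the interaction of $\coker(\alpha)$ with $\ker(\oT)$ at the upper junction, or of $\ker(\alpha)$ with $\coker(\oT)$ at the lower junction. The standard fact (and this is essentially \cite[Prop.~6.13]{dolinka2021sandwich} specialized) is that $\mathrm{rank}(\oT\alpha\oT)=\mathrm{rank}(\alpha)$ if and only if the join $\coker(\alpha)\vee\ker(\oT)$ restricts to the identity on $\codom(\alpha)$ (i.e.\ separates $\codom(\alpha)$) \emph{and} $\ker(\alpha)\vee\coker(\oT)$ separates $\dom(\alpha)$. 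I would walk through why each direction holds: if a separation fails, two transversal-endpoints of $\alpha$ get identified after concatenation, so at least one transversal is lost and the rank strictly drops; conversely, if both separations hold, every transversal of $\alpha$ extends uniquely to a transversal of $\oT\alpha\oT$, and no new loops can decrease the count, so the rank is preserved.

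I would then simply observe that, under the bijection $D\colon\RP_n^{bin}\to\Br_{n,n-2}$, the set $\mathrm{Reg}(\Br_{n,n-2}^{T})$ pulls back to the stated set of diagrams, and that since $\oT$ is the $*$-reflection of $D(T)$ we have $\ker(\oT)=\coker(D(T))$ and $\coker(\oT)=\ker(D(T))$, so the two separation conditions are exactly the ones displayed. The main obstacle, I expect, is not any deep argument but rather making the "separation"/"join of equivalences" formalism line up cleanly with the earlier, slightly nonstandard definitions of $\ker$ and $\coker$ in this paper (which are defined as symmetric binary relations on $[n]$ and on $[n+1,2n-2]$ respectively, with the domain/codomain points deliberately excluded). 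I would need to be careful that $U(\ker(\alpha))$, $\dom(\alpha)$, etc.\ partition the node sets as stated in Section~\ref{ss:brauer}, so that "$\ker\vee\coker$ separates $\dom$" is unambiguous; once that dictionary is fixed, the proposition is an immediate corollary of \cite[Prop.~6.13]{dolinka2021sandwich}, and the proof can be kept to a few lines plus the bookkeeping verification that the general hypotheses of that result are met here (in particular that $T$, equivalently $\oT$, is a fixed element of the appropriate $\Br_{q,p}$, which it is).
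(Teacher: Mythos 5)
Your proposal is correct and matches the paper's own treatment: the paper gives no independent proof of Proposition~\ref{p:regular.elts}, but simply imports it as the special case of \cite[Prop.\ 6.13]{dolinka2021sandwich} for the sandwich semigroup $\Br_{n,n-2}^{T}$, exactly as you propose. Your dictionary step (in particular $\ker(\oT)=\coker(T)$ and $\coker(\oT)=\ker(T)$, and the bookkeeping with $\dom$, $\codom$, $U(\ker)$, $U(\coker)$) is the same translation the paper itself relies on in Example~\ref{eg:reg.subsemi}, and your sketch of why the separation conditions encode rank preservation is a correct, if optional, elaboration of the cited result.
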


Here, the \emph{join} of two equivalence relations is their join in the lattice of equivalences, that is, the smallest equivalence relation that contains their union.  
The join $A\vee B$ \emph{separates} $C$ if each equivalence class in $A\vee B$ contains at most one element of $C$.

The elements of $\text{Reg}(\Br_{n,n-2}^{T})$ can be enumerated according to their rank relative to that of $T$, as follows:

\begin{thm}[\cite{dolinka2021sandwich} Corollary 6.17] \label{t:regular.size}
The cardinality of the regular subsemigroup of $\Br_{n,n-2}^{T}$ for any diagram $T$ of rank $k$, is given by
\[
|\text{Reg}(\Br_{n,n-2}^{T})|=\sum_{\substack{0\le j\le k\\ j\equiv k\mod 2}} \binom{k}{j}^2\frac{(k-j-1)!!^2(n+j-1)!!(n+j-3)!!j!}{(k+j-1)!!^2}
\]
\end{thm}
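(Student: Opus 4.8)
The plan is to derive this enumeration directly from the characterisation of regular elements in Proposition~\ref{p:regular.elts}, organising the count by the rank $j$ of the regular element $\alpha$ (which, by the rank-nonincreasing property of diagram multiplication, satisfies $0 \le j \le k$; the parity constraint $j \equiv k \bmod 2$ comes from the fact that Brauer diagram composition with the fixed $\oT$ of rank $k$ changes rank only in steps of $2$, equivalently from the congruence $|U(\ker(\alpha))| = |U(\coker(\alpha))| + 2$ recorded at the end of Section~\ref{ss:brauer} which forces $n - j$ to have a fixed parity). So I would first fix $j$, and count the diagrams $\alpha \in \Br_{n,n-2}$ of rank $j$ satisfying the two separation conditions, then sum.

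For a fixed $j$, the approach is to build $\alpha$ piece by piece and see how the separation conditions in Proposition~\ref{p:regular.elts} constrain each piece. An element $\alpha \in \Br_{n,n-2}$ of rank $j$ is determined by: (i) a choice of $\dom(\alpha) \subseteq [n]$ of size $j$ together with the perfect matching $\ker(\alpha)$ on the remaining $n-j$ top nodes; (ii) a choice of $\codom(\alpha) \subseteq [n+1,2n-2]$ of size $j$ together with the perfect matching $\coker(\alpha)$ on the remaining $n-2-j$ bottom nodes; and (iii) a bijection $\dom(\alpha) \to \codom(\alpha)$ recording the $j$ transversals, contributing a factor $j!$. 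The condition ``$\ker(\alpha) \vee \coker(\oT)$ separates $\dom(\alpha)$'' restricts how $\ker(\alpha)$ (a matching on $n-j$ nodes) may interact with the fixed data $\coker(\oT)$ (a matching on $n-2-k$ nodes, since $\oT$ has rank $k$) relative to the $j$-set $\dom(\alpha)$; symmetrically for the cokernel condition with $\ker(\oT)$ a matching on $n-k$ nodes. The factors $\binom{k}{j}$ should arise from choosing which $j$ of the $k$ transversal-endpoints of $\oT$ are ``matched up'' with transversals of $\alpha$ versus absorbed into the kernel/cokernel, and the double-factorial ratios $(k-j-1)!!^2 (n+j-1)!!(n+j-3)!! / (k+j-1)!!^2$ should count the admissible matchings on the remaining nodes after the separation constraint is imposed — the numerator terms $(n+j-1)!!$ and $(n+j-3)!!$ counting matchings on the top and bottom ``free'' nodes, and the ratio $(k-j-1)!!^2/(k+j-1)!!^2$ correcting for the interaction with $\oT$'s kernel/cokernel.

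Concretely, I would invoke the general sandwich-semigroup enumeration machinery of Dolinka--East et al.\ (the cited Corollary 6.17), whose proof proceeds by stratifying $\mathrm{Reg}(\Br_{n,n-2}^T)$ into $\mathscr D$-classes indexed by rank $j$, computing the size of each $\mathscr D$-class as (number of $\mathscr L$-classes) $\times$ (number of $\mathscr R$-classes) $\times$ ($\mathscr H$-class size), and recognising that for the Brauer monoid these counts are governed by matchings compatible with the separation conditions. The number of $\mathscr H$-classes-per-$\mathscr{D}$-class bookkeeping gives the $j!$; the $\mathscr L$- and $\mathscr R$-class counts, each a sum over an auxiliary parameter of a product of a binomial with double factorials counting ``partial matchings,'' collapse — using a double-factorial Vandermonde-type identity — to the closed forms $\binom{k}{j}\,(k-j-1)!!\,(n+j-1)!!/(k+j-1)!!$ and $\binom{k}{j}\,(k-j-1)!!\,(n+j-3)!!/(k+j-1)!!$ respectively. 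Multiplying the three and summing over $j$ with the parity restriction yields the stated formula.

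The main obstacle I anticipate is the combinatorial identity in the previous step: verifying that the raw count of matchings on the free top nodes that are compatible with ``$\coker(\alpha)\vee\ker(\oT)$ separates $\codom(\alpha)$'' (a sum over how many blocks of $\ker(\oT)$ get merged with blocks of $\coker(\alpha)$) really does telescope to the clean ratio of double factorials above. This is exactly the kind of hypergeometric summation that is routine in principle but fiddly in practice; I would handle it by setting up the sum explicitly, recognising it (after writing $(2m-1)!! = (2m)!/(2^m m!)$) as a balanced ${}_2F_1$ or Vandermonde instance, and citing the relevant identity rather than grinding through it. Everything else — the $\mathscr D$-class stratification, the separation-condition translation into ``partial matching'' language, and the final assembly — is bookkeeping that follows the template of the cited reference, so I would present it compactly and refer to \cite{dolinka2021sandwich} for the parts that are purely internal to the general theory of Brauer sandwich semigroups.
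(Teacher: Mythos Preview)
The paper does not actually prove this statement: it is quoted directly as Corollary~6.17 of \cite{dolinka2021sandwich}, with no argument supplied, and is then simply used in Example~\ref{eg:reg.subsemi}. So there is no ``paper's own proof'' to compare against.

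That said, your sketch is a faithful outline of how the result is established in the cited reference. The stratification by rank $j$, the eggbox decomposition of each $\sD$-class as (number of $\sL$-classes) $\times$ (number of $\sR$-classes) $\times$ ($\sH$-class size $=j!$), and the translation of the two separation conditions of Proposition~\ref{p:regular.elts} into constraints on top and bottom matchings are exactly the ingredients used there; your proposed closed forms for the $\sL$- and $\sR$-class counts do multiply to the displayed summand. The hypergeometric/Vandermonde collapse you flag as the main obstacle is indeed the only nontrivial computation, and in \cite{dolinka2021sandwich} it is handled by a general lemma on admissible matchings rather than left as a citation. Since the present paper treats the entire theorem as a black box, your level of detail already goes well beyond what is required here; a one-line citation would suffice.
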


\begin{eg}\label{eg:reg.subsemi}
The tree $T$ whose diagram in $\Br_{6,4}$ is $\big\{\{1,10\},\{2,9\},\{3,4\},\{5,8\},\{6,7\}\big\}$ has rank $k=4$ (see Figure~\ref{f:diagram.tree.binary}). The sum in Theorem~\ref{t:regular.size} is over $j=0,2,4$, and can be computed as follows:
\begin{align*}
|Reg(\Br_{6,4}^{T})|&= \sum_{j=0,2,4} \binom{4}{j}^2\frac{(3-j)!!^2(5+j)!!(3+j)!!j!}{(3+j)!!^2}\\
&= \binom{4}{0}^2\,\frac{3!!^25!!3!!}{3!!^2} + \binom{4}{2}^2\,\frac{7!!5!!2!}{5!!^2} + \binom{4}{4}^2\,\frac{9!!7!!4!}{7!!^2}\\
&=45+504+216\\
&=765.
\end{align*}

It is interesting to note the number of elements of each rank: 45 of rank 0, 504 of rank 2, and 216 of rank 4.  The total number of diagrams of these ranks is respectively 45, 540, and 360.  In other words, this regular subsemigroup contains all trees of rank 0, 504/540 ($93\frac{1}{3}\%$) of rank 2, and 216/360 (60\%) of rank 4.

We can dig a little further into this counting using the conditions in Proposition~\ref{p:regular.elts}.
Given that $\ker\ot=\emptyset$ and $\coker\ot=\ker T=\{(3,4),(4,3)\}$, the condition for $\alpha$ to be in this subsemigroup are that both:
$\coker(\alpha)\vee \emptyset=\coker(\alpha)$  separates $\codom(\alpha)$; and $\ker(\alpha)\vee \{(3,4),(4,3)\}$ separates $\dom(\alpha)$.  The first of these is trivially satisfied since the underlying sets of $\codom$ and $\coker$ of $\alpha$ are disjoint (they partition the set $\{7,8,9,10\}$).  The second condition is more easily approached by considering when it will \emph{not} hold, namely when the set underlying $\ker(\alpha)\vee \{(3,4),(4,3)\}$ has two or more elements in common with $\dom(\alpha)$.  Since $\ker$ and $\dom$ have disjoint underlying sets, this forces $\{3,4\}\subseteq\dom(\alpha)$.

These diagrams must have domain of size 2 or 4, and some simple counting gives the number with domain size 2 as 36, and the number of domain size 4 as 144, for a total of 180 diagrams not in the regular subsemigroup.  This gives a total number of diagrams of $765+180=945$, which is indeed the number of rooted trees on 6 leaves (which is $(2n-3)!!$, and here $9!!=945$).

\begin{figure}[ht]
\includegraphics[scale=0.8]{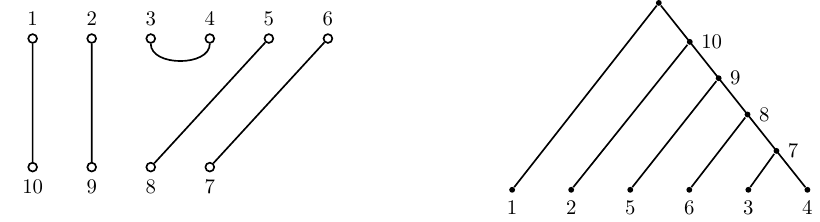}
\caption{The Brauer diagram and the tree from the partition $\big\{\{1,10\},\{2,9\},\{3,4\},\{5,8\},\{6,7\}\big\}$ in Example~\ref{eg:reg.subsemi}. }
\label{f:diagram.tree.binary}
\end{figure}

\end{eg}

There are several interesting questions related to the regular subsemigroup with respect to a tree, that we will leave for further work.  For instance, the regular subsemigroup of $T$ constitutes a type of neighbourhood of $T$ (noting that $T$ can be easily checked using Proposition~\ref{p:regular.elts} to be regular with respect to itself).  There are many ways to define a neighbourhood of a tree, for instance using operations on trees like the nearest neighbour interchange (NNI) and subtree prune and regraft (SPR) moves, or the transposition distance, also based on matchings~\cite{alberich2009algebraic,valiente2005fast}.  It would be interesting to know the relationships among these neighbourhoods.  Secondly, we can observe that in the case of Example~\ref{eg:reg.subsemi}, all trees of rank 0 are in the regular subsemigroup.  Is this a general property?  Are there properties of a tree that make it regular with respect to a large proportion of other trees?

\section{Non-binary trees and partition diagrams}
\label{s:non-binary.partitions}

In this section we extend the results from binary trees $\RP_n^{bin}$ to all trees $\RP_n$, and to forests, taking advantage of a more general family of diagrams and an associated algebraic structure, called a \emph{partition monoid}~\cite{dolinka2021sandwich}.  We begin with the generalisation to trees where the binary constraint is lifted (so that internal vertices and the root may have out-degree greater than 2).

\subsection{Non-binary trees}\label{s:nonbinary}

Recall that the underlying correspondence for binary trees on $n\ge 2$ leaves is that a tree corresponds to a matching on the set $\{1,2,\dots,2n-2\}$: a partition of the set of non-root vertices into components of size 2.  
The generalisation to non-binary trees maps a tree to a partition of the set of non-root vertices into \emph{subsets of size $\ge 2$} (note that the number of non-root vertices will be less than $2n-2$ if the tree is not binary).  For this reason, we will again exclude the trivial tree and require $n\ge 2$.

The generalisation begins by observing that Algorithm~\ref{alg:DH} applies without change when the input is a tree that is not necessarily binary (as in~\cite{erdos1989applications}).  We then obtain a partition from a tree by taking a subset to be a set of sibling vertices (having the same parent), and the correspondence immediately follows the same algorithm as for binary trees (see Figure~\ref{f:nonbinary.matching}).

Let $\Lambda_t$ denote the set of partitions of a set of $t$ objects, and $\Lambda_t^{(\ge 2)}$ the set of those partitions for which all constituent subsets have at least two elements.  We will also refer to $\Lambda_t^{(2)}$, the set of those partitions for which all components have size exactly two.  Note that $\Lambda_t^{(2)}$ is precisely the set of matchings on $t$ elements (in this case $t$ must be even).
Write $\Lambda=\cup_{t>0}\Lambda_t$ for the set of \emph{all} partitions of a finite set, and $\Lambda^{(\ge 2)}$ and $\Lambda^{(2)}$ for the corresponding sets when the sizes of components are at least 2 or exactly 2 respectively.

\begin{figure}[ht]
\includegraphics{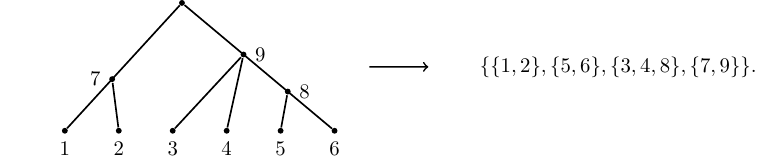}
\caption{A partition corresponding to a tree.  As with binary trees, non-leaf vertices in the tree are numbered in sequence using Algorithm~\ref{alg:DH}, choosing at each point the internal vertex whose children are all numbered and which has the lowest, numbered, child vertex.}\label{f:nonbinary.matching}
\end{figure}

We now introduce {partition diagrams}, which are generalisations of the Brauer diagrams defined above from partitions that are matchings to more general partitions.  

Recall that Brauer diagrams in $\Br_{n,n-2}$ have $2n-2$ nodes in two rows with $n$ nodes along the top numbered left to right 1 to $n$, and $n-2$ nodes along the bottom, numbered right to left $n+1$ to $2n-2$.   Nodes that are paired in the matching are connected by an edge.  A \emph{partition diagram} for an integer partition of $t=m+n$ has $n$ nodes along the top numbered from left to right 1 to $n$, and $m$ nodes along the bottom numbered right to left $m+1$ to $n+m$.  Nodes are connected by edges if their labels are in the same constituent subset of the partition.  If there are $k>2$ nodes in the subset, we do not draw all $\binom{k}{2}$ edges, but instead draw the minimal number to show their common membership, which will be $k-1$ edges.  Examples are shown in Figures~\ref{f:partition.tree} and~\ref{f:non-tree-diagram}.

Write $\dD_{n,m}$ for the set of partition diagrams with $n\ge 1$ nodes along the top and $m\ge 0$ along the bottom, and $\dD_{n,m}^{(\ge 2)}$ for the subset in which each partition is from $\Lambda_t^{(\ge 2)}$.   
The set $\Br_{n,n-2}$ is the subset of $\dD_{n,m}^{(\ge 2)}$ in which $m=n-2$ and all blocks have size exactly 2.  That is,  $\Br_{n,n-2}=\dD_{n,n-2}^{(2)}$.

For $T\in\RP_n$ (that is, not necessarily binary), write $\delta(T)$ for the corresponding element of $\dD_{n,m}^{(\ge 2)}$, and $\pi(\delta(T))$ for the corresponding set partition of $[m+n]$.  
Recall that $\ell(\pi)$ is the number of subsets in the partition $\pi$ of $X$, and $|\pi|$ is the cardinality of $X$.

The following lemma is the analogue of the property for diagrams from binary trees (that all satisfy $m=n-2$). 

\begin{lem}\label{l:m.and.blocks}
If $T\in\RP_n$, then the diagram $\delta=\delta(T)\in\dD_{n,m}^{(\ge 2)}$ satisfies $m=\ell(\pi(\delta))-1$.
\end{lem}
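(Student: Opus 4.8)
The plan is to count the non-root vertices of $T$ in two different ways and show both counts agree, then observe how these vertices are distributed between the top and bottom rows of the diagram $D(T)$.

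First I would set up the bookkeeping. Recall that $D(T)$ has $n$ nodes along the top (the leaves of $T$) and $m$ nodes along the bottom (reserved for the internal vertices that receive a label under Algorithm~\ref{alg:DH}). The algorithm labels every internal vertex \emph{except one} — it runs the \texttt{while} loop until $|I|=1$, so exactly one internal vertex (the one inferred as the root, via the most distal pairing) is left unlabelled and is \emph{not} given a bottom node. Hence $m$ equals the number of internal vertices of $T$ minus $1$. Writing $v_{\mathrm{int}}$ for the number of internal (non-root, non-leaf) vertices, we get $m = v_{\mathrm{int}} - 1$ — but I should be careful about the root convention here: in $\RP_n$ the root is a valence-$2$ vertex, and in the matching/partition picture the root is recovered from the most distal pairing rather than being a block on its own, so the cleanest statement is that $m$ counts exactly those non-root tree vertices that are internal, i.e. all internal vertices, and the bottom-row count is $m = (\text{number of internal vertices})$; I would reconcile this with the binary base case $m = n-2$ to fix the off-by-one precisely.

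Second, I would identify $\ell(\pi(D))$ with a vertex count. By construction, the blocks of $\pi(D)$ are exactly the sets of sibling vertices of $T$ (children of a common parent), as the non-root vertices are partitioned by "who is my parent". So $\ell(\pi(D))$ is the number of distinct parents appearing among non-root vertices, which is the number of vertices of $T$ that have at least one child — that is, every internal vertex \emph{and} the root, since the root has children too. Therefore $\ell(\pi(D)) = (\text{number of internal vertices}) + 1$. Combining with the previous paragraph gives $m = \ell(\pi(D)) - 1$, as required. As a sanity check, in the binary case every internal vertex is one of the $n-2$ valence-$3$ vertices, so $\ell(\pi(D)) = n-1$ and $m = n-2$, matching the known Brauer case $\Br_{n,n-2}$.

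The main obstacle I anticipate is purely the root bookkeeping: pinning down exactly which vertex fails to get a bottom node in Algorithm~\ref{alg:DH}, and how the "root inferred from the most distal pairing" convention interacts with the valence-$2$ root vertex of an $\RP_n$-tree versus the binary case. One must check that the single internal vertex left unlabelled by the loop is precisely the one that plays the role of the root in the partition picture, so that "non-root vertices partitioned by parent" and "labelled vertices" line up without double-counting or omission. Once that is nailed down — most cleanly by induction on the growth of the tree via bifurcations/multifurcations, mirroring the informal description in Section~\ref{ss:trees}, or by a direct Euler-style count of edges $=$ non-root vertices — the identity $m = \ell(\pi(D)) - 1$ follows immediately.
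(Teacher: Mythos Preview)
Your proposal is correct and is essentially the paper's own argument: blocks of $\pi(D)$ are sibling sets, hence in bijection with non-leaf vertices of $T$ (their parents), while the bottom row carries exactly the non-leaf vertices except the root, giving $m=\ell(\pi(D))-1$. The root bookkeeping you worry about in your first and third paragraphs is sidestepped in the paper simply by speaking of ``non-leaf vertices'' throughout rather than distinguishing ``internal'' from ``root'', so no induction or Euler count is needed.
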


\begin{proof}
The blocks in the diagram $\delta$ correspond to sets of vertices in $T$ that have the same parent in $T$, therefore they are in one-to-one correspondence with the set of non-leaf vertices in $T$.  The non-leaf vertices in $T$ are represented by the $m$ numbered nodes along the bottom of the diagram, with the exception of the root of $T$.  Therefore, $m=\ell(\pi(\delta))-1$. 
\end{proof}
Note, this result means that given a partition $\pi$ of an integer $t=m+n$ with blocks of size at least 2, we can compute the values of $n$ and $m$ that give a tree corresponding to $\pi$.

\begin{eg}\label{eg:nonbinary.partition}
Consider the partition of a set of 12 elements given by 
\[\left\{\{1,3,12\},\{2,9\},\{4,6,8,11\},\{5,7,10\}\right\}.\]  
Since there are four blocks, Lemma~\ref{l:m.and.blocks} implies $m=3$.  The partition diagram and corresponding tree on nine leaves with three internal non-root vertices are shown in Figure~\ref{f:partition.tree}.
\end{eg}

\begin{figure}[ht]
\includegraphics[width=\textwidth]{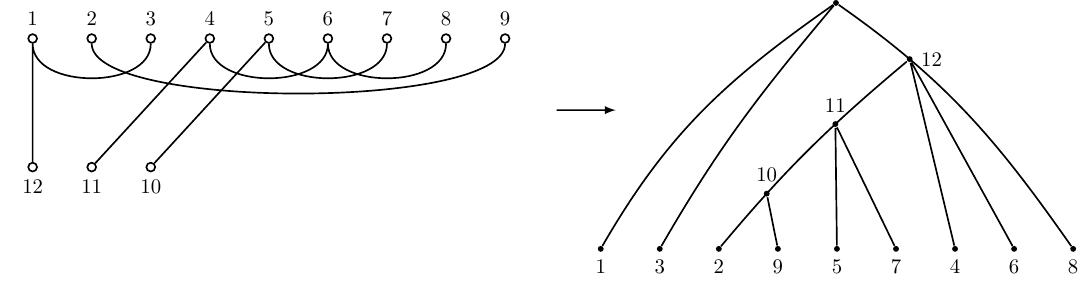}
\caption{Obtaining a non-binary tree from a partition with components of size at least 2, via a partition diagram. Here the partition is $\left\{\{1,3,12\},\{2,9\},\{4,6,8,11\},\{5,7,10\}\right\}$ given in Example~\ref{eg:nonbinary.partition}.}\label{f:partition.tree}
\end{figure}

While it is clear that each non-binary tree may be expressed as a partition diagram, it is not the case that every partition diagram is obtained from a tree, because some will not satisfy Lemma~\ref{l:m.and.blocks}.  The same observation holds, of course, for binary trees.

For example, the diagram in Figure~\ref{f:non-tree-diagram}(i) does not represent a tree.  However, the partition it displays, $\{\{1,2,6\},\{3,4,5\}\}$, corresponds to a tree via a diagram that we can find using Lemma~\ref{l:m.and.blocks} as follows (noting that $t=6$ and there are two blocks of size 3):
$n = 5$ and $m=1$.
The corresponding diagram and tree are shown in Figure~\ref{f:non-tree-diagram}(ii) and (iii).

\begin{figure}[ht]
\includegraphics[width=\textwidth]{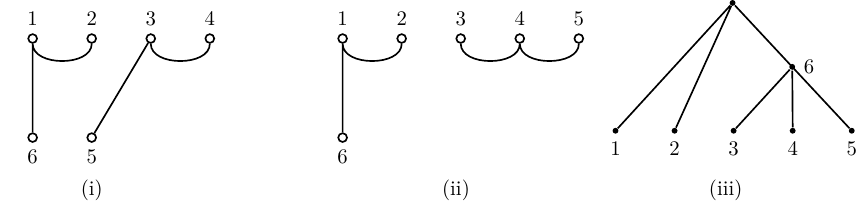}
\caption{(i) A diagram that does not correspond to a tree, because the corresponding partition $\pi=\{\{1,2,6\},\{3,4,5\}\}$ has $\ell(\pi)=2$ but the diagram does not have $m=\ell(\pi)-1=1$, as required by Lemma~\ref{l:m.and.blocks}.  (ii) The diagram of the same partition, but with the correct value of $m=1$. (iii) The corresponding tree.}
\label{f:non-tree-diagram}
\end{figure}

We are now able to generalise the correspondence between binary trees and matchings, to non-binary trees and sets of partitions.
Recalling that $\Lambda_t^{(\ge 2)}$ is the set of partitions of a set of $t$ elements into subsets of size $\ge 2$, let 
\begin{equation*}
\Lambda_{[n]}^{(\ge 2)} = \bigcup_{m=0}^{n-2}\left\{\pi\in\Lambda_{n+m}^{(\ge 2)}\mid\ell(\pi)-1=m\right\}.
\end{equation*}

Note that by using the fact that $m=|\pi|-n$, this may also be written 

\begin{equation}
\Lambda_{[n]}^{(\ge 2)} = \left\{\pi\in\Lambda^{(\ge 2)} \mid |\pi|-\ell(\pi)+1=n\right\}.
\label{e:Lambda.ge.2.no.m}
\end{equation}

\begin{thm}\label{t:trees.partitions.correspond}
There is a 1-1 correspondence between the set of partitions of finite sets into components of size $\ge 2$, and the set of (non-trivial) rooted phylogenetic trees.
\end{thm}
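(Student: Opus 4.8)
The plan is to exhibit the correspondence explicitly via the partition-diagram machinery already developed, and to check that the map and its inverse are mutually inverse. First I would set up the two sides cleanly. On one side, let $\Lambda^{(\ge 2)} := \bigcup_{n\ge 2}\Lambda_{[n]}^{(\ge 2)}$ be the set of all partitions of a finite set $\{1,\dots,t\}$ into blocks of size $\ge 2$; note that by Lemma~\ref{l:m.and.blocks} such a partition $\pi$ of $t$ elements with $\ell(\pi)$ blocks forces a unique decomposition $t = n+m$ with $m = \ell(\pi)-1$ and $n = t-m$, and one checks $n\ge 2$ automatically (since every block has size $\ge 2$, we have $t\ge 2\ell(\pi)$, hence $n = t-\ell(\pi)+1 \ge \ell(\pi)+1 \ge 2$). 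On the other side, $\RP := \bigcup_{n\ge 2}\RP_n$ is the set of all rooted phylogenetic trees. The forward map $T\mapsto \pi(D(T))$ has already been described: run Algorithm~\ref{alg:DH} on $T$ (which applies verbatim to non-binary trees), labelling each non-leaf vertex, and read off the partition whose blocks are the sibling-sets of $T$. By Lemma~\ref{l:m.and.blocks} this lands in $\Lambda_{[n]}^{(\ge 2)}$, so we do get a well-defined map $\RP\to\Lambda^{(\ge 2)}$.

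Next I would construct the inverse map. Given $\pi\in\Lambda^{(\ge 2)}$ of $t$ elements, use the decomposition $t=n+m$ above to draw the partition diagram $D\in\Pi_{n,m}^{(\ge 2)}$ (top nodes $1,\dots,n$ = leaves, bottom nodes $n+1,\dots,n+m$ = internal non-root vertices), and then build a tree: create one internal vertex for each of the $m$ bottom nodes plus one extra vertex to serve as the root, and declare the children of the vertex associated to a block $B$ to be precisely the vertices/leaves named by the elements of $B$; the root's children are exactly those vertices not appearing as an element of any block (equivalently, the ``most distal'' pairing, as in the binary case). One must verify this really is a rooted phylogenetic tree: every internal vertex has out-degree $\ge 2$ because every block has size $\ge 2$ and the set of root-children is nonempty and in fact has size $\ge 2$ (a short count using $t=n+m$, $\ell(\pi)=m+1$, and $\sum_B|B| = t = n+m$ shows the number of elements covered by blocks is $n+m-0$... more precisely one shows $n$ leaves minus those that are not root-children, plus $m$ internal non-root vertices, leaves exactly the needed slack). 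The key sub-point is acyclicity/connectedness: I would argue that the ``parent'' function (each element of $\{1,\dots,n+m\}$ has a unique parent, since $\pi$ is a partition, so blocks are disjoint and the element lies in at most one block) defines a functional graph on $n+m+1$ vertices with the root as unique sink, hence a tree.

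Then I would check the two maps are mutually inverse. Going tree $\to$ partition $\to$ tree: the reconstructed tree has the same parent-child structure as $T$ by construction, so they are isomorphic as rooted phylogenetic $X$-trees; the only thing to confirm is that the internal-vertex labels are consistent, but since the tree-to-partition direction records exactly the sibling-sets and the labels are irrelevant to the tree (internal vertices are unlabelled), the round trip returns $T$. Going partition $\to$ tree $\to$ partition: the tree built from $\pi$ has sibling-sets exactly the blocks of $\pi$, but one must check that running Algorithm~\ref{alg:DH} on this tree re-assigns labels so as to recover the same numbering $1,\dots,n+m$ as $\pi$ started with — this is precisely the content of the Diaconis–Holmes canonical-labelling convention and I would invoke that the algorithm is deterministic and its output is forced by the requirement that each newly-labelled internal vertex be the one whose already-labelled children contain the numerically smallest label; since $\pi$'s labels were (implicitly) produced this way on the original excerpt's trees, the labels match.

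\medskip

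I expect the main obstacle to be exactly this last point: verifying that the canonical labelling is genuinely a bijection, i.e., that for every partition $\pi\in\Lambda^{(\ge 2)}$ the labels $n+1,\dots,n+m$ on the internal non-root vertices of the reconstructed tree are forced — uniquely — by Algorithm~\ref{alg:DH}. One direction (the algorithm produces \emph{some} valid labelling) is easy; the subtlety is that \emph{different} labelled partitions could in principle be reconstructed to the same tree and then re-labelled to the same canonical form, which would break injectivity of the inverse. I would handle this by the same induction Diaconis and Holmes use: order the internal vertices by the algorithm, and show by induction on the step number that the label of the $k$-th processed vertex is determined by the unlabelled tree shape together with the leaf labels, hence the map $\{\text{labelled partitions in }\Lambda^{(\ge 2)}\}\to\{\text{trees}\}$ is injective and, combined with well-definedness of the forward map, bijective. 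The remaining verifications (that diagrams built this way satisfy the $\dom/\ker$ relations, that the root-child set is nonempty, degree conditions) are routine counting with $t=n+m$ and $\ell(\pi)=m+1$ and I would only sketch them.
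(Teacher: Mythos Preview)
Your overall approach matches the paper's: construct maps in each direction via the generalised Diaconis--Holmes correspondence, using Lemma~\ref{l:m.and.blocks} to recover $(n,m)$ from the partition. The paper's proof is considerably terser than yours and does not spell out the mutual-inverse check at all, so in intent you are doing more, not less.

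There is, however, a concrete error in your inverse construction. You claim that ``the root's children are exactly those vertices not appearing as an element of any block'', but since $\pi$ is a partition of $[n+m]$, \emph{every} labelled vertex appears in exactly one block --- this set is empty. More fundamentally, you have not specified \emph{which} internal vertex is the parent of \emph{which} block: you write ``the vertex associated to a block $B$'' as though this association were given, but it is precisely what the Diaconis--Holmes convention encodes and must be recovered from $\pi$. The correct inverse runs Algorithm~\ref{alg:DH} backwards: for $k=1,\dots,m$, among blocks not yet assigned a parent and whose elements are all $\le n+k-1$, the one containing the smallest element is declared to be the child-set of the vertex labelled $n+k$; the single block remaining at the end is the child-set of the root. (A pigeonhole count shows such a block is always available: if all $m+2-k$ remaining blocks contained an element $\ge n+k$, that would require $m+2-k$ distinct elements from $\{n+k,\dots,n+m\}$, a set of size only $m+1-k$.) Your later discussion of the canonical-labelling subtlety shows you sense the issue, but as written your argument is circular: you build the tree assuming the block-to-parent assignment is known, then propose to verify that the labelling matches, when in fact the labelling \emph{is} the assignment. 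Once the inverse is described correctly, the round-trip check you outline (by induction on the step number) goes through.
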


Note, the ``set of partitions of finite sets'' is not self-referential because the set of such partitions is infinite.

\begin{proof}
This result is a direct corollary to Theorem~\ref{t:forests.partitions} below.
\end{proof}

In each direction, a partition diagram may be constructed using the partition and the values of $m$ and $n$, so we also have as a consequence the following corollary.  Let
\begin{equation}\label{e:D[n].ge2}
\dD_{[n]}^{(\ge 2)}	:=\{\alpha\in\dD_{n,m}\mid \pi(\alpha)\in\Lambda_{[n]}^{(\ge 2)}\}.
\end{equation}

\begin{cor}\label{c:trees.diagrams.fixed.n}
The set $\RP_n$ is in bijection with the set of partition diagrams $\dD_{[n]}^{(\ge 2)}$.
\end{cor}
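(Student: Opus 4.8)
The plan is to leverage Theorem~\ref{t:trees.partitions.correspond} together with Lemma~\ref{l:m.and.blocks} and then simply observe that fixing $n$ on both sides of the correspondence carves out matching subsets. First I would recall that by Theorem~\ref{t:trees.partitions.correspond} there is a bijection between $\RP_n$ (for all $n$ simultaneously) and the set $\bigcup_n \Lambda_{[n]}^{(\ge 2)}$ of all partitions of finite sets into blocks of size $\ge 2$. I would then argue that this bijection restricts, for each fixed $n\ge 2$, to a bijection between $\RP_n$ and $\Lambda_{[n]}^{(\ge 2)}$: in the tree-to-partition direction, a tree with $n$ leaves produces (via the generalised Diaconis--Holmes labelling) a partition of $t=m+n$ with $\ell(\pi)=m+1$ and $m\le n-2$, i.e.\ an element of $\Lambda_{[n]}^{(\ge 2)}$; conversely, by Lemma~\ref{l:m.and.blocks} a partition $\pi\in\Lambda_{[n]}^{(\ge 2)}$ has $m=\ell(\pi)-1$ determined, and since $t = m+n$ the value of $n$ is recovered as $t - \ell(\pi) + 1$, so $\pi$ maps back to a tree with exactly $n$ leaves. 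Hence the restriction is well defined and bijective.

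The second step is to promote the bijection $\RP_n \leftrightarrow \Lambda_{[n]}^{(\ge 2)}$ to a bijection $\RP_n \leftrightarrow \Pi_{[n]}^{(\ge 2)}$ (here I am reading $\Pi_{[n]}^{(\ge 2)}$, as the statement and the surrounding text indicate, as the set of partition diagrams with $n$ nodes on top and $m = \ell(\pi)-1$ nodes on the bottom, with all blocks of size $\ge 2$). For this I would note that the passage from a partition together with the pair $(n,m)$ to a partition diagram is itself a bijection: the construction of a partition diagram given in the paragraph preceding Write~$\Pi_{n,m}$ specifies exactly which nodes sit on the top row (numbered $1$ to $n$) and which on the bottom (numbered $m+1$ to $n+m$), and which edges are drawn (the minimal $k-1$ edges inside each block of size $k$); conversely a partition diagram clearly determines $n$, $m$, and the underlying partition. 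Since by Lemma~\ref{l:m.and.blocks} the value of $m$ is forced once $n$ and $\pi$ are known, the map $\pi \mapsto D$ from $\Lambda_{[n]}^{(\ge 2)}$ to $\Pi_{[n]}^{(\ge 2)}$ is a bijection, and composing with the bijection from the first step yields the desired bijection $\RP_n \leftrightarrow \Pi_{[n]}^{(\ge 2)}$, which is precisely the map $T\mapsto D(T)$.

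I do not anticipate a serious obstacle here — this is essentially a bookkeeping corollary of the preceding theorem — but the one point that needs care is pinning down exactly what $\Pi_{[n]}^{(\ge 2)}$ denotes, since the notation $\Pi_{[n]}^{(\ge 2)}$ (with a bracketed $[n]$) is introduced only implicitly; I would make explicit that it is $\bigcup_{m=0}^{n-2}\{\,D\in\Pi_{n,m}^{(\ge 2)} : \pi(D)\in\Lambda_{n+m}^{(\ge 2)},\ \ell(\pi(D))=m+1\,\}$, mirroring the definition of $\Lambda_{[n]}^{(\ge 2)}$, and check that not every diagram in $\bigcup_m \Pi_{n,m}^{(\ge 2)}$ lies in this set (the constraint $m=\ell(\pi)-1$ excludes the non-tree diagrams, exactly as in Figure~\ref{f:non-tree-diagram}). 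With that definition in hand the corollary is immediate from the two bijections above.
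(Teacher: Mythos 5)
Your proposal is correct and follows essentially the same route as the paper: the paper treats this corollary as an immediate consequence of Theorem~\ref{t:trees.partitions.correspond}, noting that in each direction the partition diagram is determined by the partition together with the values of $m$ and $n$ (with $m=\ell(\pi)-1$ forced by Lemma~\ref{l:m.and.blocks}), which is exactly your two-step restriction-plus-bookkeeping argument. Your explicit unpacking of $\Pi_{[n]}^{(\ge 2)}$ agrees with the definition the paper records later in Section~\ref{s:forests}, so there is no gap.
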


The correspondence in Theorem~\ref{t:trees.partitions.correspond} provides the potential for new ways to enumerate the set of rooted phylogenetic trees, by decomposing the set of partitions.  

For example, the set of all partitions of ordered sets into blocks of size $\ge 2$ is naturally sliced up according to the size of the ordered set, $t$.  In terms of trees on $n$ leaves, this groups them according their number of non-root vertices.  
In light of the above bijections, trees with particular characteristics such as this are able to be counted via the partial Bell polynomials
\cite{stanley1986enumerative} $B_{t,\ell}(x_1,x_2,\cdots, x_{t\!-\ell\!+\!1})$\,, whose monomial
coefficients count the number of set partitions $\pi$ of ${[}t{]}$\,, with $\ell(\pi)=\ell $\, blocks with specific frequencies.  Note that $t\!-\ell\!+\!1= |\pi|\!-\ell\!+\!1= n$, so that these are polynomials in $n$ indeterminates.
Thus the above sequence of trees sliced by the size of the ordered set, is given by \smash{$\sum_{\ell=1}^t B_{t,\ell}(0,1,\cdots, 1)$}\, whose first few terms are $1, 1, 4, 11, 41, 162,\cdots$\, (sequence A000296 of the On-Line Encyclopedia of Integer Sequences~\cite{oeis}), so that for instance there are $|\Lambda_5^{(\ge 2)}|=11$ partitions of a set of size 5 into partitions without singletons: $\binom{5}{2}$ ways to split into subsets of size 3 and 2 (trees on four leaves with one internal vertex), and 1 way to have a subset of size 5 (the star tree on five leaves).  Similarly,  the total number of trees with bifurcations or trifurcations only is the sequence
\smash{$\sum_{\ell=1}^t B_{t,\ell}(0,1,1,0,\cdots, 0)$}\, whose first few terms are
$1, 1, 3, 10, 25, 105, 385,\cdots$\, (sequence A227937 of~\cite{oeis}).

The former decomposition together with the correspondence in Corollary~\ref{c:trees.diagrams.fixed.n} can be represented in the diagram in Figure~\ref{f:decomposition.of.trees}.

\begin{figure}[ht]
\begin{tikzcd}
\Lambda_2^{(\ge 2)} \arrow[d] & \Lambda_3^{(\ge 2)} \arrow[d] & \Lambda_4^{(\ge 2)} \arrow[d] \arrow[dl] & \Lambda_5^{(\ge 2)} \arrow[d] \arrow[dl] & \Lambda_6^{(\ge 2)} \arrow[dl] \arrow[dll] & \Lambda_7^{(\ge 2)}  \arrow[dll] & \Lambda_8^{(\ge 2)}  \arrow[dlll] \\[6mm]
\Lambda_{[2]}^{(\ge 2)} \arrow[d,<->] & \Lambda_{[3]}^{(\ge 2)} \arrow[d,<->] & \Lambda_{[4]}^{(\ge 2)} \arrow[d,<->] & \Lambda_{[5]}^{(\ge 2)} \arrow[d,<->] & \phantom{\Lambda_{[2]}^{(\ge 2)}} &\phantom{\Lambda_{[2]}^{(\ge 2)}} & \phantom{\Lambda_{[2]}^{(\ge 2)}} \\[8mm] 
\RP_2 & \RP_3 & \RP_4 & \RP_5 &&&
\end{tikzcd}
\caption{The correspondence between sets of phylogenetic trees and sets of partitions described in Theorem~\ref{t:trees.partitions.correspond}, showing how the sets of partitions decompose the sets of trees.  An example of this decomposition for $\Lambda_{[5]}^{(\ge 2)}$ is shown in Figure~\ref{f:decomp.with.example.diags}. }
\label{f:decomposition.of.trees}
\end{figure}
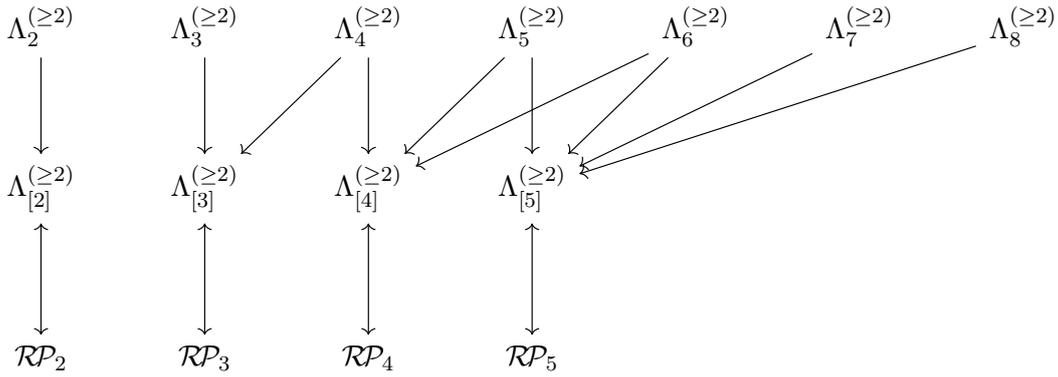

\begin{figure}[ht]
\begin{align*}
\Lambda_{[5]}^{(\ge 2)} :&= \bigcup_{m=0}^{3} \left\{\pi\in\Lambda_{5+m}^{(\ge 2)} \mid \ell(\pi)=m+1\right\} && \\
 &
=
\begin{minipage}{.4\textwidth}
  $ \left\{\pi\in\Lambda_{5}^{(\ge 2)} \mid \ell(\pi)=1\right\}$ 
\end{minipage} 
&& 
\begin{minipage}{.15\textwidth}
 \begin{tikzpicture}[scale=.5, odot/.style={circle,draw,fill=white,radius=2pt,inner sep=1.5pt,thick}] 
	\node (ref) {};
	\foreach \x in {1,...,5}  { \node[odot,left=10-3*\x mm of ref] (\x) {}; }
	\node[odot,below=3 mm of 1,white] () {};
	\draw (1) to[out=-90,in=-90] (2) to[out=-90,in=-90] (3) to[out=-90,in=-90] (4) to[out=-90,in=-90] (5);
 \end{tikzpicture}
\end{minipage} 
&& 
\begin{minipage}{.15\textwidth}
\includegraphics[height=6mm]{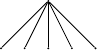}
\end{minipage} 
&&
\text{(1 tree)}
  \\
 &
 \begin{minipage}{.4\textwidth}
\quad $\cup \left\{\pi\in\Lambda_{6}^{(\ge 2)} \mid \ell(\pi)=2\right\} $
\end{minipage} 
&& 
\begin{minipage}{.15\textwidth}
 \begin{tikzpicture}[scale=.5, odot/.style={circle,draw,fill=white,radius=2pt,inner sep=1.5pt,thick}] 
	\node (ref) {};
	\foreach \x in {1,...,5}  { \node[odot,left=10-3*\x mm of ref] (\x) {}; }
	\foreach \x in {1}  { \node[odot,below=3 mm of \x] (b\x) {}; }
	\draw (1)--(b1);
	\draw (2) to[out=-90,in=-90] (3) to[out=-90,in=-90] (4) to[out=-90,in=-90] (5);
 \end{tikzpicture}
\end{minipage} 
&& 
\begin{minipage}{.15\textwidth}
\includegraphics[height=6mm]{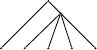}
\end{minipage} 
&&
\text{(25)}
 \\
 &
 \begin{minipage}{.4\textwidth}
 \quad $\cup \left\{\pi\in\Lambda_{7}^{(\ge 2)} \mid \ell(\pi)=3\right\}$ 
 \end{minipage} 
&& 
\begin{minipage}{.15\textwidth}
 \begin{tikzpicture}[scale=.5, odot/.style={circle,draw,fill=white,radius=2pt,inner sep=1.5pt,thick}] 
	\node (ref) {};
	\foreach \x in {1,...,5}  { \node[odot,left=10-3*\x mm of ref] (\x) {}; }
	\foreach \x in {1,2}  { \node[odot,below=3 mm of \x] (b\x) {}; }
	\draw (1)--(b1) (2)--(b2);
	\draw (3) to[out=-90,in=-90] (4) to[out=-90,in=-90] (5);
 \end{tikzpicture}
\end{minipage} 
&& 
\begin{minipage}{.15\textwidth}
\includegraphics[height=6mm]{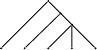}
\end{minipage} 
&&
\text{(105)}
 \\
 &
\begin{minipage}{.4\textwidth}
	\quad $\cup \left\{\pi\in\Lambda_{8}^{(\ge 2)} \mid \ell(\pi)=4\right\}$
\end{minipage} 
 &&
\begin{minipage}{.15\textwidth}
	\begin{tikzpicture}[scale=.5, odot/.style={circle,draw,fill=white,radius=2pt,inner sep=1.5pt,thick}] 
		\node (ref) {};
		\foreach \x in {1,...,5}  { \node[odot,left=10-3*\x mm of ref] (\x) {}; }
		\foreach \x in {1,...,3}  { \node[odot,below=3 mm of \x] (b\x) {}; }
		\draw (1)--(b1) (2)--(b2) (3)--(b3);
		\draw (4) to[out=-90,in=-90] (5);
	\end{tikzpicture}
\end{minipage}
&& 
\begin{minipage}{.15\textwidth}
	\includegraphics[height=6mm]{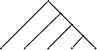}
\end{minipage}
&&
\text{(105)}
\end{align*} 
\caption{The decomposition of $\Lambda_{[5]}^{(\ge 2)}$ into sets of partitions. Examples of corresponding partition diagram shapes are in the centre column, with example corresponding tree shapes in the right hand column. The number of trees in each category is shown on the right, for instance in the second row there are $\binom{6}{2}+\frac{1}{2}\binom{6}{3}$ diagrams, and of course the last row is $(2(5)-3)!!$, giving the Ward numbers~\cite{ward1934representation} \cite[Seq. A269939]{oeis}.  Note, these are just examples and there are other possible diagram and tree structures with, for instance, two internal vertices (partitions of $[7]$ into three blocks).    This decomposes the set of all trees on 5 leaves according to the numbers of internal vertices, indicated by the number $m$ of nodes along the bottoms of the diagrams.}
\label{f:decomp.with.example.diags}
\end{figure}

\subsection{Forests}\label{s:forests}

The correspondence given in Section~\ref{s:non-binary.partitions}\ref{s:nonbinary} between trees and partitions applies to partitions with non-trivial subsets.  Recalling that subsets in a partition correspond to sibling vertices in a tree, a natural interpretation for a singleton (trivial) subset is that it corresponds to a vertex with no siblings.  Given the definition of a phylogenetic tree used here (and elsewhere) excludes non-root vertices of degree 2, the natural interpretation for a singleton subset is that it corresponds to a root vertex\footnote{Note that Erd\H{o}s and Sz\'ekely~\cite{erdos1989applications} do not strictly consider phylogenetic trees as we define them here, in that they allow internal vertices of degree 2 in their trees. Hence, their correspondence gives trees rather than forests.}.  And therefore, a diagram with singleton vertices ought to correspond to a forest.  Indeed, as in Theorem~\ref{t:forests.partitions}, forests of phylogenetic trees on $n$ leaves provide a one-to-one correspondence with a set of partitions $\Pi_{[n]}$ defined in Eq.~\eqref{e:Pi[n]} below.

Let $\F_n$ denote the set of $X$-forests, that is the set of forests whose leaves are labelled by elements of the set $X$, with $|X|=n\ge 1$.  $X$-forests are graphs whose connected components are rooted phylogenetic trees, whose leaves partition $X$.  Note that unlike the families of rooted trees, for forests we are allowing $n=1$.

Write $\ell_{\ge 2}(\pi)$ for the number of non-trivial blocks of the partition $\pi\in\dD_{n,m}$.  

Following the definition in Eq~\eqref{e:Lambda.ge.2.no.m}, define 
\begin{align}
\Lambda_{[n]}	&:=\{\pi\mid |\pi|-\ell_{\ge 2}(\pi)+1=n\}\label{e:Lambda[n]}\\
\dD_{[n]}		&:=\{\alpha\in\dD_{n,m}\mid \pi(\alpha)\in\Lambda_{[n]}\}.\label{e:Pi[n]}
\end{align}

Note, in $\dD_{[n]}$, $m=\ell_{\ge 2}(\pi(\alpha))-1$.  Let $\F_n$ denote the set of all forests on $n$ leaves, and $\F=\cup_{n\ge 1}\F_n$ the set of all forests.

\begin{thm}\label{t:forests.partitions}\ 

\begin{enumerate}
	\item $\F$ is in bijection with the set of partitions $\Lambda$;
	\item $\F_n$ is in bijection with the set of partitions $\Lambda_{[n]}$; and
	\item The set of non-trivial forests on $n$ leaves with $\tau$ components is in bijection with the set $\dD_{n,m}$ of partition diagrams with $\tau-1$ singleton nodes that satisfy $m=\ell(\pi)-\tau$.
\end{enumerate}
\end{thm}

\begin{proof}

(1) 
If $F\in\F_n$ is trivial, so that all its trees are isolated leaves, it will map to the trivial partition of $[n]$, that is, $\{\{1\},\dots,\{n\}\}$.  So we need to prove the correspondence between non-trivial forests and non-trivial partitions.

Each (non-trivial) forest gives a partition, by first numbering non-leaf vertices according to Algorithm~\ref{alg:DH}, and then forming sets of sibling vertices, with labelled root vertices forming singletons.  As with trees, this algorithm leaves a single root vertex un-labelled.  

We now explore the properties of the partition arising from a forest, to help in constructing the map back from partitions to forests.  Suppose $\pi=\pi(F)$ is the partition obtained from the forest $F$.

If $F$ is non-trivial, we have a correspondence between non-leaf vertices in $F$, and non-trivial blocks, given by the children of each non-leaf vertex.  It follows that the number of non-leaf vertices in $F$ is precisely $\ell_{\ge 2}(\pi)$.  

The number of all vertices in a non-trivial forest $F$ is $|\pi|+1$, because one vertex (one of the roots) is left unlabelled by Algorithm~\ref{alg:DH}.  The vertices in $F$ are also either leaves or non-leaves, and so this number is also equal to $n+\ell_{\ge 2}(\pi)$.  Therefore,
\[
|\pi|+1 = n+\ell_{\ge 2}(\pi),
\]
and so 
\(
n= |\pi| -\ell_{\ge 2}(\pi)+1.
\)

Now consider a non-trivial partition $\pi\in\Lambda$. We will show how a forest can be constructed from $\pi$.  Set $n= |\pi| -\ell_{\ge 2}(\pi)+1$, and create a starting forest $F_0$ consisting of $n$ vertices as leaves, labelled $1,\dots,n$.  
We will successively add vertices and edges to the forest as follows.

First, set $\pi_0=\pi$, and consider the non-trivial sets in $\pi_0$.  We have assumed that $\pi_0=\pi$ is non-trivial, so there will be at least one.  We claim that at least one of these is contained in $\{1,\dots,n\}$.  Observe that there are $|\pi_0|-n=\ell_{\ge 2}(\pi_0)-1$ integers outside  $\{1,\dots,n\}$ and included in $\pi_0$.  But there are $\ell_{\ge 2}(\pi_0)$ non-trivial subsets, and so at least one cannot contain an element outside $\{1,\dots,n\}$, as required.  

Let $\hat\pi_0$ be the set of non-trivial subsets in $\pi_0$ contained in $\{1,\dots,n\}$ (that label vertices in $F_0$), and let $S_0$ be the element of $\hat\pi_0$ containing the least integer.  This is well-defined because, as argued, $\hat\pi_0$ is non-empty.

Add a vertex $v_1$ to $F_0$, as a parent to the vertices labelled by the elements of $S_0$, to create a new forest $F_1$.  Then remove $S_0$ from $\pi_0$ to define 
\[\pi_1:=\pi_0\setminus S_0.\]
If $\pi_1$ has no non-trivial subsets, then end the algorithm and output $F_1$.  (In this case, we will have had $\ell_{\ge 2}(\pi)=1$, and so $|\pi|=n$ and all integers in $\pi$ are labelling vertices in $F_1$).

Otherwise, label the vertex $v_1$ in $F_1$ by $n+1$.  $F_1$ has leaves labelled $1,\dots, n$ and one other vertex labelled $n+1$ which is the parent of at least two of the leaves.

As before, we claim that $\pi_1$ contains a set that is a subset of $\{1,\dots,n,n+1\}$, and the argument naturally extends as follows:
\begin{itemize}
   	\item $\pi_1$ has $\ell_{\ge 2}(\pi_1)=\ell_{\ge 2}(\pi)-1\neq 0$ non-trivial subsets;
   	\item There are $|\pi|-(n+1)=\ell_{\ge 2}(\pi)-2$ integers in $\pi$ outside of $\{1,\dots,n+1\}$; and
   	\item Therefore it is not possible for all non-trivial subsets in $\pi_1$ to include an element outside $\{1,\dots,n+1\}$.
\end{itemize}   
Thus, the set $\hat\pi_1$ of non-trivial subsets in $\pi_1$ contained in $\{1,\dots,n+1\}$, is non-empty.  Choose the subset in $\hat\pi_1$ with the least integer, and call it $S_1$.

This process can continue, as described in Algorithm~\ref{alg:forest}, until we reach a point where $\pi_i$ has no non-trivial subsets and we output the resulting forest.  The forest will have $n$ leaves, and $i=\ell_{\ge 2}(\pi)$ additional vertices, of which $i-1$ will have labels: one root vertex will remain unlabelled.  Thus all elements of $\pi$ will be labelling vertices, since $|\pi|=n+\ell_{\ge 2}(\pi)-1$.

Note that singletons in the partition $\pi$ are also labelling vertices in the forest.  Any singleton in $\pi$ that is $\le n$ is already labelling an isolated leaf from the outset, and so represents a trivial tree.  Any singleton greater than $n$ will be labelling a root vertex in a tree in the forest, because it will be assigned as a parent of vertices in a block, and will not be assigned a parent because it has no other elements in its block.  See Example~\ref{e:forest.partition} for an illustration of these observations.

The process deterministically defines a forest on $n$ leaves whose vertices (except one root) are labelled by the elements of the partition $\pi$, and completes the proof of (1).

(2) immediately follows from the construction described above, which gives a correspondence between a forest with $n$ leaves and a partition satisfying the condition to be an element of $\Lambda_{[n]}$.

For (3), suppose $F$ has $n$ leaves and $\tau$ component trees, and is non-trivial.  Labelling the vertices according to Algorithm~\ref{alg:DH}, all but one of these trees will have a labelled root, and so the corresponding partition will have $\tau-1$ singletons.  As noted above, the number of non-leaf vertices is $\ell_{\ge 2}(\pi)$, because they correspond to sets of siblings, which correspond to blocks of the partition.  This includes the single non-labelled root vertex, and so the number of non-leaf labels in the partition for $F$ is $\ell_{\ge 2}(\pi)-1$, and this is the value of $m$ in the partition diagram. But since the partition has $\tau-1$ singleton sets, we have $\ell(\pi)=\ell_{\ge 2}(\pi)+(\tau-1)$, and it follows that $m=\ell(\pi)-\tau$ as required.   The reverse direction takes a diagram to a partition that then constructs a forest according to Algorithm~\ref{alg:forest}.  The conditions in the statement follow.
\end{proof}

\begin{algorithm}[ht]
\caption{Construction of a forest from a partition}\label{alg:forest}
\begin{algorithmic}
\Require $\pi\in\Lambda$ 
	\State $n\gets |\pi|-\ell_{\ge 2}(\pi)-1$
	\State $F_0\gets$ the trivial forest on leaves $\{1,\dots,n\}$
	\State $\pi_0\gets \pi$
	\State $\widehat\pi_0\gets$ the subsets of $\pi_0$ of size $\ge 2$ whose elements are labels in $F_0$
	\State $S_0\gets$ the subset in $\widehat\pi_0$ containing the least integer

	\For{$i>0$}
	    \State Create a vertex $v_i$ and edges $E_i:=\{(v_i,s)\}$ for vertices $s$ in $F_{i-1}$ labelled by $S_{i-1}$
		\State $F_i\gets F_{i-1}$ with the additional vertex $v_i$ and edges $E_i$ 
		\State $\pi_i\gets \pi_{i-1}\setminus S_{i-1}$
		\If{$\pi_i$ contains a non-trivial subset} 
			\State Label $v_i$ by $n+i$.
		\Else
			\State $F\gets F_i$
			\State Break \Comment{Go to Output statement}
		\EndIf
		\State $\widehat\pi_i\gets $ subsets in $\pi_i$ of size $\ge 2$ labelling vertices in $F_i$.
		\State $S_i\gets$ the subset in $\hat\pi_i$ with the lowest integer
	\EndFor
\State Output $F$.	

\end{algorithmic}
\end{algorithm}

\begin{eg}[Construction of a forest from a partition via Algorithm~\ref{alg:forest}.]
\label{e:forest.partition}

Take the partition $\pi$ given by:
\[\pi=\{\{1,10\},\{3,4,6\},\{5\},\{7,8\},\{9,13\},\{2,12\},\{11\}\}.
\]
Here we have $|\pi|=13$, $\ell_{\ge 2}(\pi)=5$, and $n=|\pi|-\ell_{\ge 2}(\pi)+1=9$.
Set $F_0$ to be the forest with 9 isolated leaves labelled $1,\dots,9$, and set $\pi_0=\pi$. We have $\hat\pi_0=\{\{3,4,6\},\{7,8\}\}$, and so $S_0=\{3,4,6\}$.

Let $F_1$ be the forest obtained from $F_0$ by adding a vertex $v_1$ that is a parent of the vertices labelled 3,4,6.  Set $\pi_1=\pi_0\setminus S_0 = \{\{1,10\},\{5\},\{7,8\},\{9,13\},\{2,12\},\{11\}\}$.  Since $\pi_1$ has non-trivial subsets we label $v_1$ by 10, and $\hat\pi_1=\{\{1,10\},\{7,8\}\}$, with $S_1=\{1,10\}$.

Let $F_2$ be the forest obtained from $F_1$ by adding a vertex $v_2$ that is a parent of the vertices labelled 1,10.  Set $\pi_2=\pi_1\setminus S_1 = \{\{5\},\{7,8\},\{9,13\},\{2,12\},\{11\}\}$.  Since $\pi_2$ has non-trivial subsets we label $v_2$ by 11, and $\hat\pi_2=\{\{7,8\}\}$, with $S_2=\{7,8\}$.

Let $F_3$ be the forest obtained from $F_2$ by adding a vertex $v_3$ that is a parent of the vertices labelled 7,8.  Set $\pi_3=\pi_2\setminus S_2 = \{\{5\},\{9,13\},\{2,12\},\{11\}\}$.  Since $\pi_3$ has non-trivial subsets we label $v_3$ by 12, and $\hat\pi_3=\{\{2,12\}\}$, with $S_3=\{2,12\}$.

Let $F_4$ be the forest obtained from $F_3$ by adding a vertex $v_4$ that is a parent of the vertices labelled 2,12.  Set $\pi_4=\pi_3\setminus S_3 = \{\{5\},\{9,13\},\{11\}\}$.  Since $\pi_4$ has non-trivial subsets we label $v_4$ by 13, and $\hat\pi_4=\{\{9,13\}\}$, with $S_4=\{9,13\}$.

Let $F_5$ be the forest obtained from $F_4$ by adding a vertex $v_5$ that is a parent of the vertices labelled 9,13.  Set $\pi_5=\pi_4\setminus S_4 = \{\{5\},\{11\}\}$.  Since $\pi_5$ has no non-trivial subsets we end the algorithm and output $F=F_5$.

The forests generated in this example are shown in Figure~\ref{f:forests.algorithm}.
\end{eg}

\begin{figure}[ht]
\includegraphics{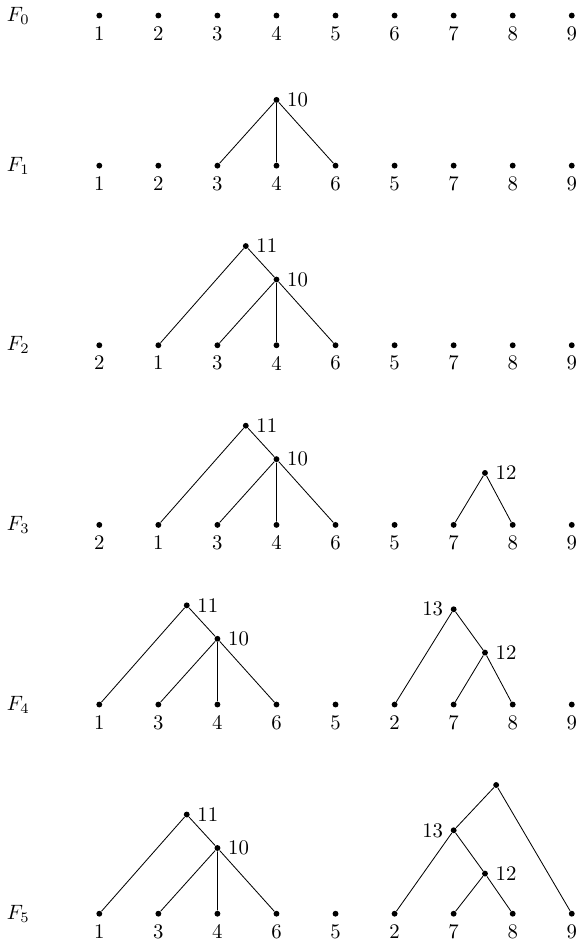}
\caption{The forests generated using Algorithm~\ref{alg:forest}, to obtain a forest from the partition $\pi=\{\{1,10\},\{3,4,6\},\{5\},\{7,8\},\{9,13\},\{2,12\},\{11\}\}$, as described in Example~\ref{e:forest.partition}, outputting $F=F_5$.}
\label{f:forests.algorithm}
\end{figure}

Note, Theorem~\ref{t:trees.partitions.correspond}, relating to non-binary trees, is a special case of this theorem for those forests that consist of a single tree, and partitions without singletons. 

The correspondence with forests given in Theorem~\ref{t:forests.partitions} creates a broad set of correspondences between sets of partitions and sets of phylogenetic objects, as shown in Figure~\ref{f:correspondences}. Recall that we have defined the following sets:
\begin{align*}
\Lambda_n^\star &= \parbox[t]{12cm}{\raggedright Set of partitions of a set of cardinality $n$, whose components satisfy the size condition $\star$.}\\
\dD_{n,m}^\star &= (n,m) \text{ partition diagrams from partitions in }\Lambda_{n+m}^\star.\\
\Lambda_{[n]}^\star &= \left\{\pi\in\Lambda^\star\ \big\vert\ |\pi|-\ell(\pi)+1=n\right\}.\\
\dD_{[n]}^\star &= \left\{\alpha\in\dD_{n,|\pi|-n}^\star\ \big\vert\ \pi(\alpha)\in\Lambda_{[n]}^\star\right\}.
\end{align*}
Here $\star$ might be empty (no restrictions), or $(2)$ or $(\ge 2)$, meaning components must be of size 2 or at least 2.  But in the next two corollaries we see correspondences for when components have size \emph{at most} 2.

\begin{figure}[ht]
\begin{tikzcd}
|[left]| \text{phylogenetics:} & \RP^{bin}_n \arrow[r, "\subset"] \arrow[d,<->]	 & \RP_n \arrow[r, "\subset"] \arrow[d,<->]  	& \F_n 	\arrow[d,<->] 	\\
|[left]| \text{partitions:} & \Lambda_{2n-2}^{(2)} \arrow[r, "\subset"] \arrow[d,<->] & \Lambda_{[n]}^{(\ge 2)} \arrow[r, "\subset"] \arrow[d,<->] & \Lambda_{[n]} \arrow[d,<->] \\
|[left]| \text{diagrams:} & \Br_{n,n-2} \arrow[r, "\subset"]    & \dD_{[n]}^{(\ge 2)} \arrow[r, "\subset"] & \dD_{[n]} \\
& |[left,rotate=90]| \text{binary trees:} & |[left,rotate=90]| \text{all trees:} & |[left,rotate=90]| \text{forests:}  
\end{tikzcd}
\caption{Correspondences between sets of trees or forests, all on $n$ leaves, sets of partitions, and partition diagrams. 
In the left column, the components of the partition are all size exactly 2, and so $m=n-2$.  
}
\label{f:correspondences}
\end{figure}
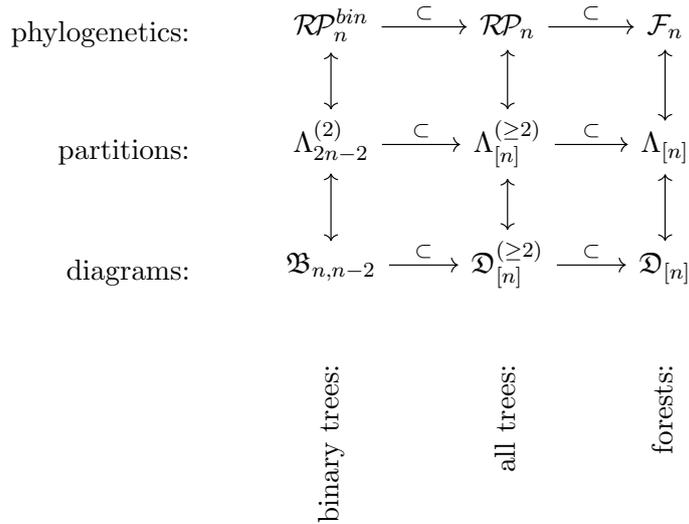

\begin{cor}\label{c:binary.forests}
The set of binary forests is in bijection with the set of partitions $\Lambda^{(\le 2)}$: those whose subsets have size at most 2. \end{cor}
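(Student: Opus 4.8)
The plan is to observe that Corollary~\ref{c:binary.forests} is the exact specialisation of Theorem~\ref{t:forests.partitions} obtained by imposing the constraint that every non-leaf vertex of each component tree be a bifurcation, and to track what that constraint says about the associated partition. A binary forest is a forest all of whose component trees lie in $\RP_n^{bin}$ (appropriately restricted to the leaves of that component), so its non-leaf vertices all have out-degree exactly $2$. Under the bijection of Theorem~\ref{t:forests.partitions}, the non-trivial blocks of the partition $\pi(F)$ are exactly the sets of sibling vertices at each non-leaf vertex, hence each has size exactly $2$; the singleton blocks are the roots of all-but-one of the components, as established in the proof of Theorem~\ref{t:forests.partitions}. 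Conversely, a partition all of whose non-trivial blocks have size $2$ produces, via the generalised Diaconis--Holmes construction, a forest in which every labelled interior vertex (and the unlabelled final root) has exactly two children, i.e.\ a binary forest. So the two classes correspond.

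First I would make precise the notion of ``binary forest'': $F\in\F_n$ is binary iff every component tree of $F$, viewed as a rooted phylogenetic tree on its own leaf set, lies in the relevant $\RP^{bin}$. Next I would set $\Pi^{(\le 2)}:=\bigcup_{n\ge 0}\Pi_{[n]}^{(\le 2)}$ in the notation of Figure~\ref{f:correspondences}, and note that by Theorem~\ref{t:forests.partitions}(3) the set $\F$ of all forests is already in bijection with the set of all partitions of finite sets, via $F\mapsto\pi(F)$. It then suffices to show that, under this bijection, $F$ is binary if and only if $\pi(F)\in\Pi^{(\le 2)}$, equivalently iff every block of $\pi(F)$ has size at most $2$. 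For the forward direction: each non-trivial block of $\pi(F)$ is a set of siblings under some non-leaf vertex $v$, and $|{\rm children}(v)|=2$ for a binary forest, so the block has size $2$; singleton blocks trivially have size $\le 2$. For the reverse direction: running the generalised Diaconis--Holmes algorithm on a partition with all non-trivial blocks of size $2$, each interior vertex created corresponds to a block of size $2$ and therefore has exactly two children, and the final unlabelled root also arises from a size-$2$ block, so the resulting forest is binary. Finally I would note that this restriction is compatible with the fixed-$n$ and fixed-component-count statements, giving the refinements analogous to parts (1) and (2) of Theorem~\ref{t:forests.partitions} if desired.

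The only subtlety — and the one point worth stating carefully rather than the ``main obstacle'' — is the status of the \emph{unlabelled} final root: in the proof of Theorem~\ref{t:forests.partitions} this root arises from a non-trivial block, so one must check it is a size-$2$ block precisely when $F$ is binary, which follows because that block is itself a set of siblings in $F$ and hence has size $2$ in the binary case and size $\ge 2$ in general. There is no real difficulty here; the corollary is genuinely immediate once Theorem~\ref{t:forests.partitions} is in hand, and the write-up is essentially a sentence identifying $\star=(\le 2)$ as the condition matching the binary constraint, together with the observation that singletons (being roots, already allowed) have size $1\le 2$ so they impose no obstruction to membership in $\Pi^{(\le 2)}$.
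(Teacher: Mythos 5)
Your proposal is correct and follows the same route the paper intends: Corollary~\ref{c:binary.forests} is stated as an immediate specialisation of Theorem~\ref{t:forests.partitions}, with non-trivial blocks being sibling sets (size exactly $2$ in the binary case) and singletons being component roots, which is exactly your argument. The extra care you take over the unlabelled final root is consistent with the theorem's proof and introduces no divergence from the paper.
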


\begin{cor}\label{c:binary.forests.n.leaves}
The set of binary forests on $n$ leaves is in bijection with the set of partitions $\Lambda_{[n]}^{(\le 2)}$.
\end{cor}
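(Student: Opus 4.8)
The plan is to read off the statement from Theorem~\ref{t:forests.partitions}(2) by restriction: the bijection $\F_n\leftrightarrow\Pi_{[n]}$ built there, cut down to binary forests on one side, should land exactly in $\Pi_{[n]}^{(\le 2)}$ on the other, and vice versa, so all that is needed is to see that these two restrictions match up.

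Concretely, I would first recall from the proof of Theorem~\ref{t:forests.partitions} that a forest $F$ on $n$ leaves is sent to the partition diagram whose non-trivial blocks are exactly the sets of siblings of $F$ --- one block per labelled non-leaf vertex, the labelling produced by Algorithm~\ref{alg:DH} applied verbatim --- together with one singleton block for each of the $\tau-1$ labelled component roots. A non-trivial block of $\pi(F)$ is thus precisely the set of children of some non-leaf vertex of $F$, so $F$ is binary if and only if every non-trivial block has size exactly $2$; since singleton blocks always have size $1$, this is equivalent to every block of $\pi(F)$ having size at most $2$, i.e.\ $\pi(F)\in\Lambda_{n+m}^{(\le 2)}$ and $D(F)\in\Pi_{[n]}^{(\le 2)}$. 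Conversely, feeding a diagram $\alpha\in\Pi_{[n]}^{(\le 2)}$ into the reconstruction of Theorem~\ref{t:forests.partitions} turns each non-trivial block --- now of size $\ge 2$ by non-triviality and $\le 2$ by hypothesis, hence exactly $2$ --- into a non-leaf vertex with exactly two children, so the reconstructed forest is binary. The two restricted maps are mutually inverse because they are restrictions of the mutually inverse maps of Theorem~\ref{t:forests.partitions}, and this proves the corollary; letting $n$ range over all values and using Theorem~\ref{t:forests.partitions}(3) likewise recovers Corollary~\ref{c:binary.forests}.

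As this is genuinely a corollary, there is no real obstacle: the substantive content --- that Algorithm~\ref{alg:DH} applies unchanged to forests and yields a well-defined, invertible partition diagram --- has already been done in Theorem~\ref{t:forests.partitions}. The single point that wants care is how the size restriction interacts with the defining numerical constraint of $\Pi_{[n]}^{(\le 2)}$ (the relation among $m$, the number of blocks, and the number of components): that constraint is literally the one defining $\Pi_{[n]}$ and never mentions block sizes, so it is unaffected by restricting to blocks of size $\le 2$. Equivalently, one notes that $\Pi_{[n]}^{(\le 2)}$ contains $\Br_{n,n-2}=\Pi_{[n]}^{(2)}$ as the $\tau=1$ subcase and that permitting singleton blocks is exactly what permits $\tau\ge 1$; the small cases (a lone cherry, $n=2$, $m=0$) are a convenient sanity check.
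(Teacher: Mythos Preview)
Your proposal is correct and is precisely the argument the paper has in mind: the corollary is stated without proof, immediately after Theorem~\ref{t:forests.partitions}, and your restriction of the bijection $\F_n\leftrightarrow\Pi_{[n]}$ to forests whose sibling sets all have size two is the intended reading. Your observation that the numerical constraint defining $\Pi_{[n]}$ is independent of block size, so that it passes unchanged to $\Pi_{[n]}^{(\le 2)}$, is exactly the point that makes the restriction go through cleanly.
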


\subsection{Semigroup structure}
The semigroup structures that we have described for binary trees in Section~\ref{s:sandwich}, also extend to non-binary trees or forests, with some caveats.  For instance, immediately we see that an action by the Temperley-Lieb generators $e_i$ will not be able to be defined for non-binary trees in general: the product of $e_i$ by a partition diagram that includes a block $\{i,i+1,j\}$ where $j$ is on the opposite side to $i$ and $i+1$, will result in a block of size one, namely $\{j\}$.  So the action of a Temperley-Lieb generator on a non-binary tree may result in a forest.

However, action by the symmetric group generators $s_i$ does preserve the restriction on the partition diagram.  In fact, action by $s_i$ preserves the number and size of the blocks in the partition.  Therefore, we are still able to construct the eggbox diagram decomposition of the set of non-binary trees, and indeed for forests.  This decomposition will have a richer structure, however, as the $\DD$-classes are not determined simply by rank, but by other factors (the number and size of the blocks in the partition).  This is a topic we leave for further investigation.

The sandwich semigroup construction allows multiplication of non-binary trees, as it does in the binary case, although in general the product will not be closed. That is, if one of the trees involved in the sandwich product is not binary, it is possible that the product of diagrams results in an isolated node, which means the diagram may correspond to a forest of more than one tree (see Figure~\ref{f:sandwich.non.tree}).
Thus, the set of diagrams for non-binary trees is not closed under the sandwich product.
Whether there are subfamilies of non-binary trees for which the product is defined is an interesting further question.

\begin{figure}[ht]
\includegraphics[width=\textwidth]{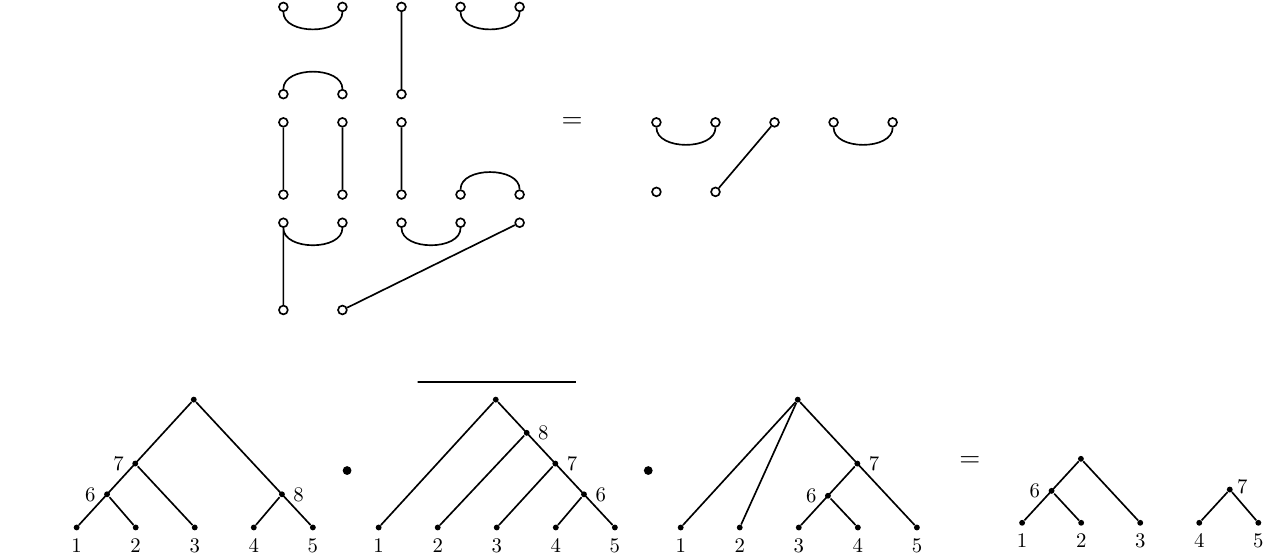}
\caption{Top: A sandwich product involving a diagram from a non-binary tree that results in a diagram that does not correspond to a tree, because it has an isolated node, but to a forest with two components.  Bottom: the same sandwich product showing the trees and forest involved.  }
\label{f:sandwich.non.tree}
\end{figure}

Likewise, the sandwich product allows the multiplication of two forests relative to a third. As with non-binary trees, there will be some products that are not defined, because the numbers of nodes in the diagrams do not match (if the numbers of non-leaf vertices in the forest are not equal).  But additionally, even if this is satisfied so that the diagram product is defined, the product may not result in a forest.  And this also applies to sandwich products of non-binary trees: the product of two non-binary trees, or forests, relative to a third, may result in a diagram that does not even represent a forest, in that it violates the condition in Theorem~\ref{t:forests.partitions}(iii).  An example is shown in Figure~\ref{f:sandwich.not.closed}.

\begin{figure}[ht]
\includegraphics[width=.7\textwidth]{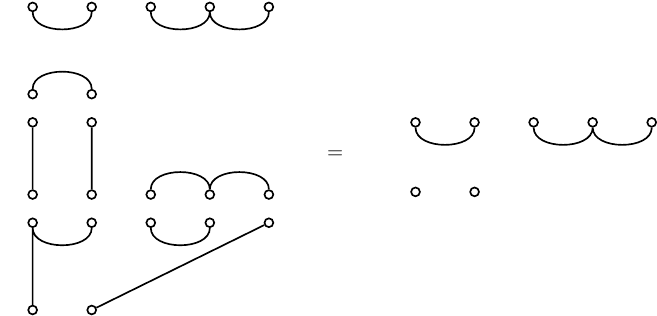}
\caption{A sandwich product involving diagrams from two non-binary trees relative to a third, and which results in a diagram that does not define a tree or a forest, because it violates the condition in Theorem~\ref{t:forests.partitions}(iii).  This product diagram has $m=2$, $\tau=3$ (one more than the number of singleton nodes), and $\ell(\pi)=4$, so does not satisfy $m=\ell(\pi)-\tau$.}
\label{f:sandwich.not.closed}
\end{figure}

\section{Discussion}\label{s:discussion}

The link between phylogenetic trees and algebraic structures such as Brauer monoids and partition monoids that we have described gives rise to a wealth of questions that need further exploration.  

To begin, some opportunities for further development were raised by Holmes and Diaconis in 1998~\cite{diaconis1998matchings}.  For instance, they suggest the representation of a tree as a matching that they introduced (and treating the matched pairs as 2-cycles) allows for the use of multiplication in the symmetric group to create a random walk in tree space.  They also suggested the use of multiplication of matchings through use of the Brauer algebra, that we have developed further in this paper in a direction they perhaps did not anticipate (by preserving the leaves along the top of a diagram to create an unbalanced but biologically interpretable model).

The developments here, representing the matching (or partition more generally) as a Brauer or partition diagram, allow a more targeted random walk to be defined that preserves certain key structures of the trees.  That is, a random step can be performed by acting on the top or bottom of a diagram by an element of $\sS_n$ or $\sS_m$, and allow movement along an $\RR$-class or $\LL$-class, which preserves certain structural properties of the tree, as described in Section~\ref{s:structure}.

There are many further questions that warrant exploration, some of which we list here:

\begin{enumerate}
	\item Questions about products of trees in the sandwich semigroup:
	\begin{enumerate}
		\item We have defined a way to multiply two binary trees relative to a third, via the sandwich product.  How do features of trees (such as their balance) behave when they are multiplied together?  
		\item Are some properties of trees `closed' under multiplication?  If two trees share the same property, when does their product share the same property?
		\item A feature of the sandwich semigroup construction is that it creates a product \emph{relative} to a fixed tree.  Is there some way to exploit this feature in phylogenetics? For instance, is it possible that gene trees may be constrained to be inside some neighbourhood defined by the product relative to the species tree (see \cite{szollosi2015inference} for discussion of the challenges understanding this relationship)?
		\item The sandwich semigroup product works naturally for binary trees, whose Brauer diagrams have predictable numbers of nodes.  Given the comments in Section~\ref{s:non-binary.partitions}\ref{s:forests} and Figure~\ref{f:sandwich.not.closed}, are there subclasses of forests (other than binary trees) for which the product is defined and closed? 
	\end{enumerate}
	\item Questions about the regular subsemigroup:
	\begin{enumerate}
		\item To what extent are features of trees preserved within the regular subsemigroup corresponding to the tree? Using balance as an example again, are the trees in the subsemigroup all close to being balanced under one of the standard balance measures?
		\item How do topological operations on a tree, such as the Nearest Neighbour Interchange (NNI), interact with the structures described here? For instance, is the NNI neighbourhood of a tree contained in the regular subsemigroup of the sandwich semigroup at $T$?  Or as mentioned at the end of Section~\ref{s:sandwich}, are trees of rank 0 (with maximal cherries) always in the regular subsemigroup? 
	\end{enumerate}	
	\item Other links to tree space:
	\begin{enumerate}
		\item As noted in the Introduction, Holmes and Diaconis raised the prospect of using the product of trees to randomly move around tree space or search for an optimal solution to a problem.  The properties of this random walk, or one extended to non-binary trees or forests, are unknown.
		\item If we restrict attention only to Brauer diagrams that are \emph{planar}, that is, for which there are no crossings of lines, then we obtain another closed algebraic structure.  That is, in the sandwich semigroup corresponding to a tree with planar diagram, the product of two planar diagrams will remain planar.  Any diagram can be made planar by relabelling the leaves, so would this semigroup describe some sort of canonical representatives of tree space?
		\item Can operations on trees,  such as edge-cutting that take a tree and produce a forest (but also even tree rearrangement operations such as NNI), be implemented by a sandwich product, or an action by an element of the partition monoid?  For instance, can sandwich products such as that seen in Figure~\ref{f:sandwich.non.tree} be controlled systematically?
		\item There is an interesting correspondence between ``augmented perfect matchings of $[2n-2]$ containing $\ell$ wiggly lines'' and phylogenetic trees on $n$ leaves and $n-\ell-1$ internal and root vertices, recently described in~\cite{price2020augmented}.   It would be interesting to investigate how such augmented matchings link to diagrams, and the set of partitions $\Lambda_{[n]}^{(\ge 2)}$ described here for such trees in Theorem~\ref{t:trees.partitions.correspond}. 
	\end{enumerate}
	\item Other links to semigroup theory and the Brauer algebra:
	\begin{enumerate}
		\item Within semigroup theory, and indeed broadly within other algebraic areas, the idempotents (elements $x$ for which $x^2=x$) play a very important role.  What are the idempotents within the sandwich semigroup, and what relationships do the corresponding trees share?  A formula for the number of idempotents in the sandwich semigroup is known~\cite[Theorem 6.18]{dolinka2021sandwich}: 
		for instance, for a 6 leaf tree with 3 cherries (rank 0), there are 45 idempotents in the corresponding sandwich semigroup (which is also the number of rank 0 trees on six leaves).  A characterisation could be illuminating.  Narrowing it down further, the mid-identities (elements that are idempotents when regular but satisfying additional conditions) could have a concrete phylogenetic relationship to the original tree.
		\item It would be interesting to study further the restricted Green's relations in which the action on left and right is from a subgroup or subsemigroup (like our $\sS_n$ action in Section~\ref{s:sandwich}).
		\item The general relations in the Brauer algebra (see Appendix~\ref{s:brauer.relations}) contain a central generator $\lambda$ that we ignore by setting it to be 1.  In the algebra, it tracks loops that can occasionally be generated by diagram concatenation, as occurs in the example in Figure~\ref{f:brauer.product.eg}. Is there phylogenetically relevant information that can be captured by loops arising in products, and that might benefit from use of the full $\Br_n(\lambda)$?
	\end{enumerate}
\end{enumerate}

There are several opportunities to extend the ideas in this paper in different directions.  
For instance, is it possible to represent phylogenetic \emph{networks} within a diagram semigroup framework?
And coming from the algebraic point of view, is there a role for other monoidal and categorical systems, such as those described in \cite{dolinka2021sandwich} (compare also Loday \cite{MR1958907}) to play within phylogenetics, or other scientific and combinatorial problems?  

Clearly, there is opportunity for exploration and development of this approach at an algebraic, combinatorial, phylogenetic, and computational level.  Diagram semigroups, monoids, algebras, and categories have found numerous diverse and powerful applications within mathematics and physics, and it is exciting to think that they may open new doors to phylogeneticists.

\subsection*{Acknowledgements}
When the authors began exploring the potential for modeling phylogenetic trees with Brauer diagrams, they made the fortunate decision to ask James East for advice.  They are tremendously grateful for his patient explanations of ideas such as the sandwich semigroup, and for pointing them to important references.

\appendix

\section{Presentation for the Brauer algebra}
\label{s:brauer.relations}

For each $n=1,2,\cdots$ the Brauer algebra ${\mathfrak B}_n(\lambda)$ is the complex algebra generated by the elements
$\{s_1,s_2,\cdots, s_{n\!-\!1}, e_1,e_2,\cdots, e_{n\!-\!1}\}$
subject to the defining relations~\cite{Brauer1937algebras}:
\begin{align*}
s_i^2= &\,1\,,\quad s_i s_{i\!+\!1}s_i = s_{i\!+\!1}s_is_{i\!+\!1}\,,\\
s_i s_j = & s_j s_i \,\quad \mbox{for} \quad |i-j|>1 \,;\\
e_i^2= &\,\lambda e_i\,,
\quad e_i e_{i\pm1}e_i = e_i\,, \\
\quad s_is_{i\pm1}e_i = &\, e_{i\pm1}e_i\,,\quad
e_is_{i\pm1}s_i = e_ie_{i\pm1}\,,\\
e_i e_j = &\, e_j e_i \,, \quad s_i s_j = s_j s_i\,,
\quad s_i e_j =  e_j s_i \,, \quad \mbox{for} \quad |i-j|>1 \,;\\
s_ie_i=&\,e_is_i = e_i\,, \quad e_i s_{i\pm 1} e_i = e_i\,.
\end{align*}
For the generalization of these relations arising from the connection with binary trees,  to the full partition algebra underlying the extension to non-binary trees and forests, 
see \cite{martin1996spa,HalversonRam2005pa}\,.

\end{document}